\theoremstyle{plain}
\newtheorem{fact}[theorem]{Fact}
\newtheorem{claim}{Claim}
\numberwithin{claim}{theorem}
\newcommand\numberthis{\addtocounter{equation}{1}\tag{\theequation}}
\title{Efficient Distributed Computation of MIS and Generalized MIS in Linear Hypergraphs}
\author{Fabian Kuhn}{University of Freiburg, Germany}{kuhn@cs.uni-freiburg.de}{}{}
\author{Chaodong Zheng}{State Key Laboratory for Novel Software Technology, Nanjing University, China}{chaodong@nju.edu.cn}{}{}
\authorrunning{F. Kuhn and C. Zheng}
\subjclass{\ccsdesc[500]{Theory of computation~Distributed algorithms}}
\keywords{Maximal independent set, generalized maximal independent set, linear hypergraph, CONGEST model.}
\begin{document}

\maketitle

\begin{abstract}
Given a graph, a \emph{maximal independent set (MIS)} is a maximal subset of pairwise non-adjacent vertices. Finding an MIS is a fundamental problem in distributed computing. Although the problem is extensively studied  and well understood in simple graphs, our knowledge is still quite limited when solving it in hypergraphs, especially in the distributed CONGEST model. In this paper, we focus on linear hypergraphs---a class of hypergraphs in which any two hyperedges overlap on at most one node.

We first present a randomized algorithm for computing an MIS in linear hypergraphs. It has poly-logarithmic runtime and it works in the CONGEST model. The algorithm uses a network decomposition to achieve fast parallel processing. Within each cluster of the decomposition, we run a distributed variant of a parallel hypergraph MIS algorithm by \L{}uczak and Szyma\'nska.

We then propose the concept of a \emph{generalized maximal independent set (GMIS)} as an extension to the classical MIS in hypergraphs. More specifically, in a GMIS, for each hyperedge $e$ in a hypergraph $\mathcal{H}$, we associate an integer threshold $t_e$ in the range $[1,|e|-1]$, and the goal is to find a maximal subset $\mathcal{I}$ of vertices that do not violate any threshold constraints: $\forall e\in E(\mathcal{H}), |e\cap \mathcal{I}|\leq t_e$. We hope that GMIS might capture a broader class of real-world problems than MIS; we also believe that GMIS is an interesting and challenging symmetry breaking problem on its own.

Our second upper bound result is a distributed algorithm for computing a GMIS in linear hypergraphs, subject to the constraint that the maximum hyperedge size is bounded by some constant. Again, the algorithm has poly-logarithmic runtime and it works in the CONGEST model. It is obtained by generalizing our previous (linear) hypergraph MIS algorithm.
\end{abstract}


\section{Introduction}

A \emph{hypergraph} $\mathcal{H}=(\mathcal{V},\mathcal{E})$ is defined by a set of nodes $\mathcal{V}$ and a set of \emph{hyperedges} $\mathcal{E}$. Unlike simple graphs, a hyperedge in a hypergraph can contain two or more nodes. (In this paper, we ignore hyperedges of size one, as for the problems we consider, these hyperedges can be trivially preprocessed.) The maximum hyperedge size of a hypergraph $\mathcal{H}$ is usually called the \emph{dimension} (or \emph{rank}) of $\mathcal{H}$. As Linial~\cite{linial13} and Kutten et al.~\cite{kutten14} have pointed out, while simple graphs capture \emph{pairwise} interactions well, hypergraphs are ideal for modeling \emph{multi-party} interactions. For example, social networks can contain multiple overlapping groups, each of which has multiple individuals; economic transactions often involve several parties, and each party can participate in several transactions at the same time.

Despite their importance, solving graph-theoretic problems in hypergraphs in a distributed fashion is often highly non-trivial, and usually much less understood than the corresponding problems in simple graphs. Computing a \emph{maximal independent set (MIS)} is a prominent example. For a hypergraph $\mathcal{H}=(\mathcal{V},\mathcal{E})$, an \emph{independent set} $\mathcal{I}$ of $\mathcal{H}$ is a subset of $\mathcal{V}$ such that for each hyperedge in $\mathcal{E}$, at least one node is not in $\mathcal{I}$. An independent set $\mathcal{I}$ is called \emph{maximal} if adding any new node to $\mathcal{I}$ would violate independence. Efficient computation of an MIS is an important problem in distributed computing theory: it is a fundamental symmetry breaking problem; it could also be a key building block for solving many other problems (such as matching and vertex coloring).

Efficient algorithms for computing an MIS in simple graphs have long been known, and improvements are still being made (see, e.g., \cite{alon86,luby86,barenboim16,ghaffari16,censor-hillel17}). In contrast, for nearly three decades, researchers have been seeking a parallel algorithm for computing hypergraph MIS within poly-logarithmic time, yet the answer is still unclear (see, e.g., \cite{karp88,beame90,kelsen92,luczak97,bercea14,harris17}). For distributed message-passing systems,  the hypergraph MIS problem has received much less attention. The two classic computational models to study distributed graph problems in message-passing systems are the LOCAL model and the CONGEST model. In both cases, the network is modeled as an $n$-vertex graph and communication happens in synchronous rounds. In the LOCAL model, the messages exchanged in every round can be of arbitrary size, while in the CONGEST model only messages of size $O(\log n)$ are allowed. Currently, to the best of our knowledge, poly-logarithmic time algorithms for the hypergraph MIS problem only exist in the LOCAL model, or in the CONGEST model if the input hypergraph has constant dimension~\cite{kutten14}.

It is no coincidence that the hypergraph MIS problem has a poly-logarithmic time (randomized) LOCAL solution. As has been made explicit by Ghaffari et al.~\cite{ghaffari17}, so long as a graph problem has a ``sufficiently local'' sequential greedy algorithm, there exists a systematical way to build a randomized LOCAL algorithm that solves the problem in poly-logarithmic time. However, this strategy has two drawbacks: (a) large message sizes; and (b) the considered problem is actually solved in a somewhat centralized fashion (though at smaller scale) which might involve non-trivial local computation. On the contrary, to compute an MIS in simple graphs, both the classical algorithm by Luby and Alon et al.~\cite{alon86,luby86} and (the first part of) the latest solution proposed by Ghaffari~\cite{ghaffari16} work well in the CONGEST model, and incur little local computation. Therefore, an interesting open question is: do poly-logarithmic time CONGEST algorithms exist that can solve the hypergraph MIS problem?

In this paper, we make some progress towards answering this open question. Particularly, we focus on \emph{linear hypergraphs}---a class of hypergraphs in which any two hyperedges intersect in at most one node---and devise efficient algorithms to compute MIS and another closely related structure in such hypergraphs, in the distributed CONGEST model. We note that although linear hypergraphs are a specific subclass of hypergraphs, unique challenges that do not arose in simple graphs persist. In general, our hope is that understanding the MIS problem for linear hypergraphs will be an important intermediate step along the path for solving MIS in general hypergraphs, in the distributed CONGEST model.

\bigskip\noindent{\bf MIS in Linear Hypergraphs.} Our first result is a randomized algorithm that computes an MIS for a linear hypergraph in poly-logarithmic time in the distributed CONGEST model. Conceptually, the algorithm contains two parts. In the first part, we utilize \emph{network decomposition}~\cite{elkin16} to decompose the input hypergraph into multiple smaller ones, each with bounded diameter. (The motivation for doing so will be discussed shortly.) The second part contains multiple iterations. In each iteration, within the bounded-diameter subhypergraphs, we further generate \emph{equitable subhypergraphs}. (Roughly speaking, an equitable hypergraph is somewhat like a ``regular graph'' in the simple graph world.) Then, within each equitable subhypergraph, we independently mark each node with a carefully chosen probability, and let marked nodes that do not violate independence constraints join the MIS. Since the subhypergraphs are equitable, we prove that many nodes will decide in each iteration. Hence, after not too many iterations, the algorithm will output a complete MIS.

The second part of this algorithm can be seen as a distributed variant of a parallel hypergraph MIS algorithm proposed by \L{}uczak and Szyma\'nska~\cite{luczak97}. Nonetheless, to maintain the correctness and efficiency of the original algorithm, the conversion process is nontrivial. Particularly, the first issue is that the original algorithm depends on knowledge of some global parameters. To avoid incurring $\Omega(D)$ time complexity where $D$ is the diameter of the input hypergraph, we employ network decomposition. This is also the motivation for the first part of our algorithm. The second and more critical issue is that the original algorithm  depends on $\Theta(n)$ global parameters, in the worst case. These information would cost too much time to collect, even after decomposition. To resolve this problem, we have refined the detailed analysis so that our algorithm now only depends on $O(\log{n})$ parameters.

\bigskip\noindent{\bf The Generalized MIS (GMIS) Problem.} One way to interpret the hypergraph MIS problem is: for each hyperedge $e\in\mathcal{E}$, associate a threshold $t_e=|e|-1$, then an MIS $\mathcal{I}$ is a maximal subset of $\mathcal{V}$ such that for each hyperedge $e\in\mathcal{E}$, the number of nodes in $\mathcal{I}$ does not exceed $t_e$. (I.e., $\forall e\in\mathcal{E}, |\mathcal{I}\cap e|\leq t_e=|e|-1$.) Now, by allowing $t_e$ to be any integer value between one and $|e|-1$, we obtain what we define as the \emph{generalized maximal independent set (GMIS)} problem. That is, in the GMIS problem, for each hyperedge $e\in\mathcal{E}$, we define (as input) a threshold $t_e$ where $1\leq t_e\leq |e|-1$, and the goal is to find a maximal subset $\mathcal{I}$ of $\mathcal{V}$ such that for each hyperedge $e\in\mathcal{E}$, we have $|\mathcal{I}\cap e|\leq t_e$.

As previously mentioned, hypergraphs is an ideal structure to capture multi-party interactions. The thresholds on hyperedges can be used to represent the constraints posed by various problems. Therefore, we believe the additional flexibility of GMIS (in comparison with MIS) would allow it to model a wider range of real-world problems.

\bigskip\noindent{\bf GMIS in Linear Hypergraphs.} Allowing arbitrary thresholds on hyperedges makes the already hard hypergraph MIS problem even more challenging. For example, many hypergraph MIS algorithms critically rely on the property that an independent set in a subhypergraph is also an independent set in the original hypergraph. However, as we shall see, a generalized independent set in a subhypergraph is \emph{not} necessarily a generalized independent set in the original hypergraph. As a result, we might have to adjust the definition of subhypergraph accordingly, which in turn could significantly affect the performance and/or correctness of the original algorithm.

In this paper, we show that GMIS can be solved in $O(\log^2{n})$ time in the LOCAL model. Moreover, by generalizing our previous hypergraph MIS algorithm, we are able to devise a CONGEST algorithm that can solve GMIS in poly-logarithmic time, subject to the constraint that the input hypergraph is linear and has constant dimension. It is also worth noting, although we use the same high-level strategy, important adjustments to both the algorithm and the analysis are made during the generalization process. 

At first glimpse, it may seem easy to obtain a poly-logarithmic time GMIS algorithm for constant dimension hypergraphs, even in the CONGEST model. However, it turns out that the most intuitive strategies do not lead to the desired outcome. For instance, the approach of reducing the maximum hyperedge threshold one by one can be slow. This is because, in the simple graph setting, Luby's algorithm and its variants achieve high efficiency by considering both the nodes that decide to join and not to join the MIS. Yet, for hypergraph GMIS (as well as MIS), it is hard to analyze how many nodes will decide to not join, thus raising difficulties to arguing how fast nodes are removed, or how fast the maximum hyperedge threshold is reduced.

\section{Related Work}

Efficient computation of MIS in simple graphs has always attracted numerous attention. In two seminal papers, Alon, Babai, and Itai, as well as Luby~\cite{alon86,luby86} provided a randomized algorithm which solves the problem in $O(\log{n})$ time. Since then, many other solutions were proposed (see, e.g., Section 1.1 of \cite{barenboim16} for a brief survey), and the current best known (randomized LOCAL) algorithm is proposed by Ghaffari~\cite{ghaffari16}.

Perhaps surprisingly, however, how to efficiently compute MIS in hypergraphs is much less well understood. As we have mentioned earlier, researchers have been seeking a parallel algorithm that can compute a hypergraph MIS within poly-logarithmic time under the PRAM model for decades, yet the answer is still unclear. More specifically, in 1990, Beame and Luby~\cite{beame90} introduced a randomized algorithm with poly-logarithmic runtime for computing an MIS in hypergraphs of dimension three. Kelsen~\cite{kelsen92} improved the analysis of \cite{beame90} so that the algorithm can work for all constant dimension hypergraphs. Later, \L{}uczak and Szyma\'nska~\cite{luczak97} showed that for all linear hypergraphs, the problem can also be solved within poly-logarithmic time. The second part of our hypergraph MIS algorithm is a refined distributed variant of \L{}uczak and Szyma\'nska's algorithm. On the other hand, for general hypergraphs, early result by Karp et al.~\cite{karp88} proved an MIS can be obtained in $O(\sqrt{n}\cdot(\log{n}+\log{m}))$ time where $m$ is the number of hyperedges. Later, by repeatedly using the algorithm of \cite{beame90}, Bercea et al.~\cite{bercea14} gave an algorithm that works in $n^{o(1)}$ time, subject to the constraint that there are not too many hyperedges. More recently, Harris~\cite{harris17} improved the result of Kelsen~\cite{kelsen92} and devised an algorithm with runtime $O(\log^{2^r}{n})$ for hypergraphs with dimension $r$. Lastly, we note that in the original paper by Beame and Luby~\cite{beame90}, the authors also proposed another simple parallel algorithm and conjectured it can solve MIS within poly-logarithmic time, for any hypergraph. However, to the best of our knowledge, the correctness of this conjecture is still unknown.

For message-passing distributed systems, even fewer attention were paid to the hypergraph MIS problem, and the most recent result comes from Kutten et al.~\cite{kutten14}. In their paper, by employing network decomposition~\cite{linial93} and exploiting the local nature of the MIS problem, the authors provided a $O(\log^2{n})$ time LOCAL algorithm. In contrast, under the CONGEST model in which each message is of bounded size, the authors presented two other results: (a) for hypergraphs with constant dimension $d$, a $O(\log^{(d+4)!+4}{n})$ time algorithm; and (b) for general hypergraphs, a $O(\min\{\Delta^\epsilon\log^{(1/\epsilon)^{O(1/\epsilon)}}{n},\sqrt{n}\})$ time algorithm where $\Delta$ is the maximum degree and $1\geq\epsilon\geq 1/(\frac{\log\log{n}}{c\log\log\log{n}}-1)$ for some constant $c$. In our linear hypergraph MIS algorithm, the dimension can be arbitrary, and the degree of the poly-logarithmic term (in the running time) does not depend on the dimension.

\section{Model and Problem}

A \emph{hypergraph} $\mathcal{H}=(\mathcal{V},\mathcal{E})$ is defined by a set of nodes $\mathcal{V}$, and a set of \emph{hyperedges} $\mathcal{E}$. We usually assume $|\mathcal{V}|=n$, and each node has a unique identity. For each hyperedge $e\in\mathcal{E}$, it contains two or more nodes in $\mathcal{V}$. The maximum size of all hyperedges is called the \emph{dimension} (or \emph{rank}) of a hypergraph. A hypergraph is a \emph{linear hypergraph} if for each pair of hyperedges, they overlap on at most one node. For a set of nodes $\mathcal{V}'\subseteq\mathcal{V}$, define $\mathcal{H}'=(\mathcal{V}',\mathcal{E}')$ to be the induced \emph{subhypergraph} of $\mathcal{H}$ where $\mathcal{E}'=\{e\ |\ e\in\mathcal{E},e\subseteq\mathcal{V}'\}$.

For each hyperedge $e\in\mathcal{E}$, we associate an integer threshold $t_e$ where $1\leq t_e\leq |e|-1$. For a subset $\mathcal{I}$ of $\mathcal{V}$, we call it a \emph{generalized independent set} if for each $e\in\mathcal{E}$, $|e\cap\mathcal{I}|\leq t_e$. We say a generalized independent set $\mathcal{I}$ is a \emph{generalized maximal independent set (GMIS)} if adding any extra node to $\mathcal{I}$ would violate some hyperedge's threshold constraint. Notice, if for each hyperedge $e\in\mathcal{E}$ we define $t_e=|e|-1$, then a generalized independent set becomes a classical hypergraph \emph{independent set}, and a generalized maximal independent set becomes a classical hypergraph \emph{maximal independent set (MIS)}.

To model a hypergraph $\mathcal{H}=(\mathcal{V},\mathcal{E})$, we consider a synchronous message-passing network in which time is divided into discrete \emph{slots}. We adopt the \emph{server-client} model used in \cite{kutten14}. In this model, $\mathcal{H}$ is realized as a simple bipartite graph $G_{\mathcal{H}}=(V_{\mathcal{H}},E_{\mathcal{H}})$. The nodes in $G_{\mathcal{H}}$ are partitioned into two sets: $S_{\mathcal{H}}$ and $C_{\mathcal{H}}$. Each node in $S_{\mathcal{H}}$ represents a particular node in $\mathcal{V}$, and each node in $C_{\mathcal{H}}$ represents a particular hyperedge in $\mathcal{E}$. We call the nodes in $S_{\mathcal{H}}$ as \emph{servers}, and the nodes in $C_{\mathcal{H}}$ as \emph{clients}. For a node $u\in C_{\mathcal{H}}$ and a node $v\in S_{\mathcal{H}}$, there is an edge (i.e., a bidirectional communication link) connecting them if and only if the node represented by $v$ is contained within the hyperedge represented by $u$.

Another model to represent a hypergraph $\mathcal{H}=(\mathcal{V},\mathcal{E})$ is called the \emph{vertex-centric} model. In this model, $\mathcal{H}$ is again realized as a simple graph $G_{\mathcal{H}}=(V_{\mathcal{H}},E_{\mathcal{H}})$. However, here $V_{\mathcal{H}}$ simply denotes the set of nodes in $\mathcal{V}$, and there is an edge between two nodes $u$ and $v$ iff there is a hyperedge in $\mathcal{E}$ containing both $u$ and $v$. We call $G_{\mathcal{H}}$ as the \emph{server graph} of $\mathcal{H}$.

Throughout this paper, at the network layer, we use the server-client model to represent hypergraphs. However, for the ease of presentation, we will sometimes discuss the server graph of the specified hypergraph.

Regarding the capacity of the communication links, we will mostly consider the \emph{CONGEST} model. More specifically, in each time slot, for each direction of each link, only a $O(\log{n})$-sized message can be sent. Sometimes, we will also discuss the implications of our results under the \emph{LOCAL} model. In that case, in each time slot, for each direction of each link, an arbitrarily large message can be sent.

In this paper, we are interested in finding efficient distributed algorithms that can solve MIS and GMIS in linear hypergraphs in the CONGEST model. Particularly, we will develop (Monte Carlo) randomized algorithms that can solve the considered problems \emph{with high probability (w.h.p.)}, i.e., a probability that is at least $1-1/n^c$ for some constant $c\geq 1$.

\section{Decomposing Hypergraphs}

Network decomposition (see, e.g., \cite{linial93,elkin16}) is a widely used technique in distributed computing for solving graph theoretic problems. For a simple graph $G=(V,E)$, a \emph{$(d,c)$-network-decomposition} is a partition of $V$ so that: (a) for each slice of the partition (i.e., a subset of $V$), the induced subgraph has diameter at most $d$; and (b) we can assign each slice of the partition a color within a set of $c$ colors, and ensure any two adjacent nodes in $G$ of the same color must be in the same slice of the partition. Moreover, $d$ is called the \emph{weak diameter} if, when computing the diameters of the induced subgraphs, edges not in the subgraph (but in $E$) can be used; otherwise, $d$ is called the \emph{strong diameter}.

For many network algorithms, network decomposition can be used to boost efficiency as it allows for parallelism: subgraphs with the same color can usually be processed at the same time without interfering each other. Network decomposition is also helpful in that it bounds the diameter of the graph instances the algorithm will process.

Our hypergraph MIS/GMIS algorithm also relies on network decomposition to achieve high efficiency: first, decompose the input hypergraph into multiple subhypergraphs of bounded diameter; then, iterate through all colors and run the core MIS/GMIS algorithm in parallel within subhypergraphs of the same color; finally, combine all partial solutions to obtain a complete MIS/GMIS of the original hypergraph.

In this part of the paper, we will present the guarantees provided by the decomposition procedure; we will also show that combining the MIS/GMIS found in each decomposed subhypergraph correctly gives a complete MIS/GMIS of the original input hypergraph.

We begin with the decomposition procedure. The idea of decomposing input hypergraph into multiple smaller ones and then compute MIS for these subhypergraphs in parallel has been used by Kutten et al.~\cite{kutten14}. In that paper, the authors utilized the classical  $(O(\log{n}),O(\log{n}))$-network-decomposition algorithm developed by Linial and Saks~\cite{linial93}. However, Linial and Saks's algorithm only produces a decomposition with weak diameter $O(\log{n})$, thus might result in congestions when communication occurs in multiple subhypergraphs simultaneously. To resolve this issue, Kutten et al.\ slightly modified Linial and Saks's algorithm so as to upper bound the potential congestion.

Recently, Elkin and Neiman developed a new $(O(\log{n}),O(\log{n}))$-network-decomposition algorithm with strong diameter $O(\log{n})$~\cite{elkin16}. This is a strict improvement when compared with Linial and Saks's algorithm. Therefore, we implement Elkin and Neiman's algorithm in our model in this paper. More specifically, the decomposition procedure---which is described in detail in the proof of the following lemma---guarantees the following properties.

\begin{lemma}\label{lemma-decomp-alg}
Let $G_{\mathcal{H}}$ be the server graph of an $n$-node hypergraph $\mathcal{H}$. With high probability, in $O(\log^2{n})$ time slots, for some positive integer $k$, we can partition nodes of $\mathcal{H}$ into $k$ sets $S_{1}, S_{2}, \cdots, S_{k}$, produce $k$ subgraphs of $G_{\mathcal{H}}$ denoted by $G_{1}, G_{2}, \cdots, G_{k}$, and assign a color within a set of $O(\log{n})$ colors to each set, such that: (a) for all $i$, subgraph $G_{i}$ is the induced subgraph of $S_{i}$ and has strong diameter $O(\log{n})$; (b) for any $S_{i}$ and $S_{j}$ that are assigned with the same color, there is no hyperedge in $\mathcal{H}$ that contains nodes in both $S_{i}$ and $S_{j}$.
\end{lemma}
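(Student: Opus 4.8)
The plan is to run the strong-diameter network-decomposition algorithm of Elkin and Neiman~\cite{elkin16} on the server graph $G_{\mathcal{H}}$, and then to simulate each of its steps in $O(1)$ rounds over the underlying server-client network. Their algorithm already produces, for an arbitrary $n$-node graph and in $O(\log^2 n)$ CONGEST rounds on that graph, an $(O(\log n),O(\log n))$-decomposition with strong diameter, i.e.\ a partition into clusters of strong diameter $O(\log n)$ together with an $O(\log n)$-coloring in which same-colored clusters are non-adjacent. Applying it verbatim to $G_{\mathcal{H}}$ yields the sets $S_1,\dots,S_k$, the induced subgraphs $G_1,\dots,G_k$, and the coloring, and property~(a) is then exactly the strong-diameter guarantee of~\cite{elkin16}. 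For property~(b) I would use that two nodes lie in a common hyperedge precisely when they are adjacent in the server graph; hence if some hyperedge met both $S_i$ and $S_j$, there would be an edge of $G_{\mathcal{H}}$ between these clusters, which for same-colored $S_i,S_j$ contradicts the non-adjacency-of-same-color property. Thus property~(b) is merely the hyperedge restatement of the standard coloring guarantee.

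The real work is the CONGEST simulation. I would first recall that the Elkin--Neiman procedure is ball-growing: in each of $O(\log n)$ phases (one per color), every still-active node $v$ draws a random radius $r_v$, and for $O(\log n)$ rounds each node maintains and propagates its current best value $b_v=\max_u\bigl(r_u-\mathrm{dist}(u,v)\bigr)$ together with the identity $s_v$ of a source attaining it; cluster membership, the spanning tree, and the separating ``moat'' are all read off from these values. The point I would then establish is that every message sent along a server-graph edge is of the \emph{aggregate-and-broadcast} type: in one simulated step each server sends its pair $(b_v,s_v)$ to each incident client, each client (hyperedge) computes the single maximum over the pairs it receives and broadcasts that one pair back, and each server updates $b_v\leftarrow\max(b_v,M_e-1)$ over its clients $e$. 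Because a client only ever forwards one $O(\log n)$-bit maximum rather than relaying distinct messages between its servers, a hyperedge of any size is handled in $O(1)$ rounds and no congestion arises. This congestion check is the main obstacle: one must verify that no step of the algorithm secretly requires a node to receive \emph{different} information from different neighbors through a shared hyperedge (during tree construction or the separation test, say), since such a step would force a client of size $s$ to spend $\Omega(s)$ rounds forwarding. Confirming that maximization with source identities captures everything needed—including parent pointers, which each server recovers by noting through which client the winning pair arrived—is the crux.

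Once the simulation is in place, the remaining bookkeeping is routine. Each propagation step costs $O(1)$ bipartite rounds; each phase uses $O(\log n)$ such steps, since the radii are $O(\log n)$ w.h.p.\ and the balls stop growing within that many rounds; and there are $O(\log n)$ phases, giving the claimed $O(\log^2 n)$ running time, $O(\log n)$ colors, and the high-probability guarantee inherited from~\cite{elkin16}. Finally, since each edge of $G_{\mathcal{H}}$ is realized by a length-two path in the server-client graph, the strong-diameter-$O(\log n)$ clusters also have diameter $O(\log n)$ in the actual communication network, which is what the later parts of the algorithm will rely on.
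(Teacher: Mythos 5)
Your approach is essentially the paper's: run Elkin--Neiman on the server graph and simulate each propagation step in $O(1)$ rounds of the server-client network by having clients aggregate and rebroadcast a maximum, so that large hyperedges cause no congestion. Properties (a) and (b) are handled exactly as you say.

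One concrete point in your simulation needs fixing. The Elkin--Neiman membership rule is not a pure argmax: a node $y$ joins the current block iff $g_1-g_2>1$, where $g_1$ and $g_2$ are the \emph{two} largest values of $r_u-\mathrm{dist}(u,y)$ that reach $y$. Your clients forward only the single best pair $(b_v,s_v)$, which destroys the information needed for this gap test -- a server could see two sources through the same client, and the client's single-maximum relay would suppress the runner-up. The paper's simulation therefore has each client (and each server, in alternating slots) forward the \emph{two} largest $g_i$ values it has seen in the phase, which still fits in one $O(\log n)$-bit message. A second issue you flag but do not resolve is whether the true maximum can be lost: in the original algorithm a value $r_u$ propagates only $\lfloor r_u\rfloor$ hops, so an intermediate node might stop forwarding a value once it sees a larger one. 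The paper includes a short argument that this is harmless: if some $g''_1>g_1$ causes an intermediate node to drop $g_1$, then $g''_1$ itself propagates at least as far and reaches $y$, so the maximum (and by the same reasoning the second maximum) received in the simulation equals the one in the original execution. With these two repairs your proof matches the paper's.
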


\begin{proof}
We first briefly describe Elkin and Neiman's network decomposition algorithm. (More details can be found in the original paper~\cite{elkin16}.) The algorithm contains $\ln{n}$ stages. In the $i$\textsuperscript{th} stage, there are $2(cn/e^i)^{1/m}$ phases; we also fix $\beta_i=\ln{(cn/e^i)}/m$. Here, $c$ and $m$ are parameters that can be adjusted. Let $G'_1=G_{\mathcal{H}}$. In each phase $t$, we carve a block $W_t$ out of the current graph $G'_{t}$, and let $G'_{t+1}=G'_t\backslash W_t$. Notice, all nodes in $W_t$ gets a unique color, and each connected component in $W_t$ is a slice of the final partition.

In the $t$\textsuperscript{th} phase, each node $v$ in $G'_t$ independently samples a value $r_v$ from the exponential distribution with parameter $\beta_t$, where $\beta_t$ is the value of $\beta$ for the stage the $t$\textsuperscript{th} phase is contained within. Each node $v$ in $G'_t$ broadcasts $r_v$ to all nodes in $G'_t$ that are within distance $\lfloor r_v\rfloor$ from it. On the other hand, each node $y$ in $G'_t$ also records the values of $r_v$ that have reached it, along with the distances to these nodes. Then, $y$ sorts these nodes $v_1,v_2,\cdots,v_x$ according to $g_i=r_{v_i}-\texttt{dist}_{G'_{t}}(y,v_i)$ in decreasing order. Finally, $y$ is added to $W_t$ iff $g_1-g_2>1$.

As have been shown in \cite{elkin16}, by choosing proper $c$ and $m$, w.h.p.\ the above algorithm finishes within $O(\log{n})$ phases, and $r_v$ will always be bounded by $O(\log{n})$. Moreover, any connected component in any $W_t$ has strong diameter $O(\log{n})$. I.e., the algorithm can create a decomposition with strong diameter $O(\log{n})$ in $O(\log^2{n})$ time, using $O(\log{n})$ colors.

We now describe one simple way to simulate Elkin and Neiman's algorithm in our CONGEST server-client model. To implement the $t$\textsuperscript{th} phase, we use $2j=\Theta(\log{n})$ time slots. More specifically, in the first slot within the phase, each server node $v$ sends $r_v$ to its neighboring client nodes. Then, we repeat the following for $2j-1$ time slots: in each even (resp., odd) slot, each client (resp., server) node sends the two maximum $g_i$ it has seen since the beginning of this phase to its neighbors.

To see the correctness of the above simulation, consider a server node $y$. Assume in the original algorithm, in the phase in which $y$ gets a color, the two maximum values it obtained are $g_1=r_{u_1}-\texttt{dist}(y,u_1)$ and $g_2=r_{u_2}-\texttt{dist}(y,u_2)$. Further assume in our simulation, the two maximum values $y$ obtained are $g'_1=r_{u'_1}-\texttt{dist}(y,u'_1)$ and $g'_2=r_{u'_2}-\texttt{dist}(y,u'_2)$. Clearly, $g'_1\leq g_1$. Moreover, in that phase, a value equal to $g_1$ will reach $y$. Otherwise, there must exist value $g''_1=r_{u''_1}-\texttt{dist}(y,u''_1)>g_1$, such that on the path from $u_1$ to $y$, some node $z$ receives both $g''_1$ and $g_1$, and decides stop forwarding $g_1$. In such case, in the original execution, $g''_1$ will reach $y$ as well because $\lfloor r_{u''_1}-\texttt{dist}(z,u''_1)\rfloor\geq\lfloor r_{u_1}-\texttt{dist}(z,u_1)\rfloor$ decides the number of remaining hops the message will propagate (from node $z$), contradicting the assumption that the maximum value received by $y$ is $g_1$. Hence, we know $g'_1=g_1$. Similarly, we can also prove $g_2=g'_2$. Therefore, we know our simulation is correct.

Since simulating one phase costs $\Theta(\log{n})$ time slots (as $r_v\in O(\log{n})$), and there are $O(\log{n})$ phases, the decomposition procedure terminates in $O(\log^2{n})$ time in our network model. The properties in the lemma follow by the definition of network decomposition.
\end{proof}

With a proper decomposition, the core MIS/GMIS algorithm only needs to deal with bounded diameter hypergraphs. In particular, the following lemma---which is inspired by Lemma 3 in \cite{kutten14}---shows that if we can compute MIS/GMIS in low diameter hypergraphs fast, then we can also compute it in general hypergraphs fast. Notice, when compared with the original version, the proof is generalized so that the claim holds for GMIS as well.

\begin{lemma}[Decomposition lemma, generalized version of Lemma 3 in \cite{kutten14}]\label{lemma-decomp}
Assume we are given a hypergraph $\mathcal{H}$ containing $n$ nodes. If there exists an algorithm $\mathcal{A}$ that computes an MIS (resp., GMIS) for hypergraph $\mathcal{H}'$---which contains $n'\leq n$ nodes and has $O(\log{n})$ diameter---in $T(n')$ time, then there exists an algorithm that computes an MIS (resp., GMIS) for $\mathcal{H}$ within $O(T(n)\cdot\log{n}+\log^2{n})$ time.
\end{lemma}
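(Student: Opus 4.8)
The plan is to combine the network decomposition from Lemma~\ref{lemma-decomp-alg} with the given algorithm $\mathcal{A}$, processing color classes sequentially but subgraphs of the same color in parallel. First I would invoke Lemma~\ref{lemma-decomp-alg} to partition the $n$ nodes of $\mathcal{H}$ into clusters $S_1,\dots,S_k$ with induced subgraphs $G_1,\dots,G_k$ of strong diameter $O(\log n)$, each assigned one of $O(\log n)$ colors, so that no hyperedge of $\mathcal{H}$ touches two distinct clusters of the same color. This costs $O(\log^2 n)$ time, which accounts for the additive $\log^2 n$ term in the claimed bound. Then I would iterate over the $O(\log n)$ color classes one at a time; within a single color class I run $\mathcal{A}$ simultaneously on every cluster of that color.

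For a fixed color class, the key point is that the subhypergraphs induced by the clusters of that color do not interfere. Because no hyperedge spans two same-colored clusters, the nodes already committed to the partial solution from previously processed colors are fixed, and each cluster can run $\mathcal{A}$ on its induced subhypergraph (with thresholds adjusted to account for nodes from earlier colors that already lie in some shared hyperedge). Since the decomposition has strong diameter $O(\log n)$, all communication needed by $\mathcal{A}$ stays within the cluster, so there is no congestion across clusters of the same color and they truly run in parallel. Each cluster has at most $n$ nodes and diameter $O(\log n)$, so one color class takes $T(n)$ time; summing over $O(\log n)$ colors gives the $O(T(n)\cdot\log n)$ term. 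Adding the decomposition cost yields the overall $O(T(n)\cdot\log n+\log^2 n)$ bound.

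The main obstacle, and the part requiring genuine care, is correctness rather than timing: I must show that gluing together the per-cluster solutions produces a global MIS (resp.\ GMIS) of $\mathcal{H}$. For the MIS case this is the argument already present in Lemma~3 of~\cite{kutten14}: any hyperedge $e$ entirely inside one cluster is handled by that cluster's run of $\mathcal{A}$, and any hyperedge spanning multiple clusters has its nodes decided across the color sweep so that at least one endpoint is excluded, while maximality follows because a node left undecided could only be blocked by a hyperedge constraint. The delicate generalization is the GMIS case, where the threshold $t_e$ of a cross-cluster hyperedge $e$ must be respected globally even though $e$ is split among several clusters that $\mathcal{A}$ never sees as a single hyperedge. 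The resolution is to process colors sequentially and, before running $\mathcal{A}$ on a cluster, decrement each relevant hyperedge threshold by the number of already-selected nodes of $e$ sitting in earlier-color clusters; feeding these residual thresholds into $\mathcal{A}$ guarantees $|e\cap\mathcal{I}|\le t_e$ holds globally, and maximality is preserved because a node is rejected only when admitting it would saturate some (possibly residual) threshold. I would verify that this thresholding is consistent and that the resulting $\mathcal{I}$ is indeed maximal, which is exactly where the generalization over the original MIS-only argument lies.
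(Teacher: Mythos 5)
Your proposal is correct and follows essentially the same route as the paper: decompose via Lemma~\ref{lemma-decomp-alg}, sweep the $O(\log n)$ colors sequentially, run $\mathcal{A}$ in parallel on same-colored clusters (no congestion since no hyperedge spans two same-colored clusters), and for GMIS feed each cluster the residual thresholds $t_e-y_e$ where $y_e$ counts the already-included nodes of $e$ from earlier colors. The one detail the paper makes explicit that you elide is the filtering rule: the restricted hyperedge $e'=e\cap S_t$ is added to the cluster's instance only when the nodes of $e$ that are in later colors or already excluded number strictly fewer than $|e|-t_e$; otherwise the residual constraint is vacuous (and not a legal threshold in $[1,|e'|-1]$, so not a valid input to $\mathcal{A}$) and the hyperedge is dropped, which is precisely what makes both the independence and the maximality halves of the gluing argument go through.
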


\begin{proof}
Let $G_{\mathcal{H}}$ be the server graph of $\mathcal{H}$. First, run the network decomposition algorithm on $G_{\mathcal{H}}$ as discussed in the proof of Lemma \ref{lemma-decomp-alg}. This step takes $O(\log^2{n})$ time slots.

The next step contains $O(\log{n})$ iterations, and in the $i$\textsuperscript{th} iteration we consider node sets with color $i$. Assume node set $S_t$ has color $i$, and the corresponding subgraph is $G_{t}$. In the $i$\textsuperscript{th} iteration, we need to decide for each node in $S_{t}$ whether it is in the final solution of MIS (resp., GMIS) or not. In the following analysis, we assume we have already done so for the node sets with color $1$ to $i-1$.

Define $\mathcal{H}_{t}$ to be the following subhypergraph. $\mathcal{H}_{t}$ contains all nodes in $S_{t}$. (Recall that a node in $S_{t}$ represents a node in $\mathcal{H}$.) For each hyperedge $e$ that contains some node in $S_{t}$, count the number of nodes that satisfy either of the following two conditions: (a) a node in a set of color $j>i$; or (b) a node in a set of color $j<i$ that has already decided to not be in the MIS (resp., GMIS). If the count is strictly smaller than $|e|-t_e$ where $t_e$ is the threshold of $e$ in $\mathcal{H}$, then we add a hyperedge $e'=e\cap S_{t}$ to $\mathcal{H}_{t}$. The threshold of $e'$ is the remaining threshold that is still available to $e$. Notice, since we use the server-client model to realize the hypergraph, in a synchronized execution, we can construct $\mathcal{H}_{t}$ in a constant number of time slots, even in the CONGEST model. In particular, in the $i$\textsuperscript{th} iteration, a server (i.e., node in $\mathcal{H}$) $u$ can first tell each adjacent client (i.e., hyperedge) $e$ about its color and whether it has decided to be in the MIS (resp., GMIS) or not. These information can be sent within one message. The client can then locally check and decide, for nodes with color $i$, whether $e'=e\cap S_{t}$ should be added to $\mathcal{H}_{t}$ or not. Next, the client can inform each adjacent server with color $i$ about whether $e'$ is constructed or not, and the remaining threshold. (However, this acknowledgment cannot contain the identities of the nodes in $e'$ due to message size constraint.)

Once $\mathcal{H}_t$ is constructed, we compute MIS (resp., GMIS) of $\mathcal{H}_{t}$. In particular, we run algorithm $\mathcal{A}$ on $\mathcal{H}_t$. Since $G_{t}$ has $O(\log{n})$ diameter, we know $\mathcal{A}$ will finish within $O(T(n))$ time if we only run it on $G_t$. However, we need to run $\mathcal{A}$ on all $G_{t_i}$ with color $i$. Nevertheless, due to property (b) in Lemma \ref{lemma-decomp-alg}, we can indeed run $\mathcal{A}$ on all $G_{t_i}$ in parallel without worrying about congestion. Hence, we can still finish executing $\mathcal{A}$ on all such $G_{t_i}$ in $O(T(n))$ time.

After running $\mathcal{A}$ for all $\mathcal{H}_t$ of all colors, the combined solution of all $\mathcal{H}_t$ will be a valid solution for the MIS (resp., GMIS) problem on hypergraph $\mathcal{H}$. We now prove the correctness of this claim. Let $M_t$ be the constructed MIS (resp., GMIS) of $\mathcal{H}_t$. Firstly, observe that any node in $M_t$ can be added to the MIS (resp., GMIS) solution of $\mathcal{H}$ without violating the threshold constraints. This is because, when constructing $\mathcal{H}_t$, for each hyperedge $e$ in $\mathcal{H}$ that contains some node in $S_t$, if $e'=e\cap S_t$ is added to $\mathcal{H}_t$, then threshold of $e$ is inherited and updated. Otherwise, if $e'$ is not added, then even if all nodes in $e$ with color $i$ decide to join the MIS (resp., GMIS), the threshold of $e$ will not be violated, as there are enough nodes in $e$ that have decided to not join the MIS (resp., GMIS), or have not decided yet. Secondly, we claim if a node $u\in S_t$ is not in $M_t$, then there exists a hyperedge $e'$ in $\mathcal{H}_t$ such that adding $u$ to $M_t$ would violate the threshold constraint of hyperedge $e$. Here, $e$ is a hyperedge in $\mathcal{H}$ and $e'=e\cap S_t$. To see this, assume adding $u$ to $M_t$ would violate the threshold constraint of $e'$ in $\mathcal{H}_t$. (We can make this assumption since $\mathcal{A}$ can correctly compute MIS (resp., GMIS) in $\mathcal{H}_t$.) Further assume there are $y_e$ nodes in $e$ that have already decided to join the MIS (resp., GMIS) when constructing $e'$. This implies the threshold associated with $e'$ is $t_e-y_e$. Moreover, adding $u$ to $M_t$ would make $t_e-y_e+1$ nodes in $e'$ decide to join the MIS (resp., GMIS). Therefore, adding $u$ to $M_t$ would make $t_e+1$ nodes in $e$ decide to join the MIS (resp., GMIS), which is a violation.

To complete the proof of the lemma, notice that we need $O(T(n))$ time for each color, and we have $O(\log{n})$ colors. Therefore, the total time complexity for computing MIS (resp., GMIS) on $\mathcal{H}$ is $O(T(n)\cdot\log{n}+\log^2{n})$.
\end{proof}

Before proceeding to the next part, we note that Lemma \ref{lemma-decomp} implies we can solve hypergraph GMIS (hence MIS as well) in $O(\log^2{n})$ time, in the distributed LOCAL model.

\begin{theorem}\label{thm-gmis-local}
A GMIS can be computed in $O(\log^2{n})$ time in the LOCAL model, w.h.p.
\end{theorem}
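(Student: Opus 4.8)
The plan is to invoke the Decomposition lemma (Lemma \ref{lemma-decomp}): it reduces the task of computing a GMIS on an arbitrary $n$-node hypergraph to that of computing a GMIS on subhypergraphs of diameter $O(\log n)$, at the cost of an $O(\log n)$ multiplicative factor plus an additive $O(\log^2 n)$ term. Concretely, if I can exhibit an algorithm $\mathcal{A}$ that computes a GMIS on any hypergraph $\mathcal{H}'$ with $n'\le n$ nodes and diameter $O(\log n)$ in time $T(n')=O(\log n)$, then Lemma \ref{lemma-decomp} yields an overall running time of $O(T(n)\cdot\log n+\log^2 n)=O(\log^2 n)$, as desired. So the entire proof reduces to designing such an $\mathcal{A}$ for the LOCAL model.

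For $\mathcal{A}$, I exploit the defining feature of the LOCAL model: messages may be arbitrarily large. Since the subhypergraph $\mathcal{H}'$ handed to $\mathcal{A}$ is connected with diameter $O(\log n)$, in $O(\log n)$ rounds every node can flood its local data and thereby learn the complete topology of $\mathcal{H}'$---the full list of nodes, hyperedges, and the (possibly updated) thresholds $t_e$. (In the server-client realization, distances are inflated by at most a constant factor, so $O(\log n)$ rounds still suffice.) Once a node holds this complete description, it runs a fixed deterministic sequential greedy procedure entirely locally, with no further communication: process the nodes in increasing order of identity, and add the current node $v$ to the set $\mathcal{I}$ under construction iff $\mathcal{I}\cup\{v\}$ is still a generalized independent set, i.e.\ iff doing so violates no threshold $t_e$.

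It remains to check that this greedy procedure returns a valid GMIS and that all nodes agree on it. The crucial structural fact is that being a generalized independent set is downward closed: if $|e\cap\mathcal{I}|\le t_e$ for all $e$, then the same holds for every subset of $\mathcal{I}$. Hence the set $\mathcal{I}$ produced by the greedy is a generalized independent set at every step, in particular at the end. For maximality, consider any node $v\notin\mathcal{I}$; at the moment $v$ was examined, some hyperedge $e\ni v$ already satisfied $|e\cap\mathcal{I}|=t_e$, and since $\mathcal{I}$ only grows yet never exceeds $t_e$, this edge still has $|e\cap\mathcal{I}|=t_e$ in the final set, so adding $v$ would violate $e$. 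Thus $\mathcal{I}$ is a GMIS. Consistency is immediate: because every node of the component sees the identical topology and runs the identical deterministic rule (with identities breaking all ties), every node outputs exactly the same set $\mathcal{I}$.

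I expect the only genuinely delicate point to be bookkeeping rather than mathematics: making sure that the thresholds carried into each subhypergraph by the construction in Lemma \ref{lemma-decomp} are precisely the ones on which $\mathcal{A}$ runs its greedy, and that a node learns these updated thresholds during the flooding phase. The correctness of combining the per-color partial solutions has already been established inside Lemma \ref{lemma-decomp}, so no additional argument is needed there; plugging $T(n)=O(\log n)$ into that lemma completes the proof, with the $O(\log^2 n)$ bound holding w.h.p.\ through the decomposition guarantee of Lemma \ref{lemma-decomp-alg}.
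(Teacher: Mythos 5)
Your proposal is correct and follows essentially the same route as the paper: decompose via Lemma \ref{lemma-decomp-alg}, observe that in the LOCAL model each $O(\log n)$-diameter subhypergraph can be learned entirely by flooding in $O(\log n)$ rounds and a GMIS computed locally and consistently, and plug $T(n)=O(\log n)$ into Lemma \ref{lemma-decomp}. The only difference is that you spell out the deterministic greedy and its correctness (downward closure and maximality), which the paper leaves implicit; this is a valid and welcome elaboration, not a different approach.
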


\begin{proof}
As stated in Lemma \ref{lemma-decomp-alg}, in $O(\log^2{n})$ time, we can decompose the input hypergraph. Then, we proceed as specified in the proof of Lemma \ref{lemma-decomp}. Notice, we are now in the LOCAL model, which means each message can be of arbitrary size. For each subgraph, since the diameter is $O(\log{n})$, by flooding information for $O(\log{n})$ time slots, all nodes in the subgraph will know everything about the constructed hypergraph, and can thus compute (identical) GMIS for the constructed hypergraph locally. This implies $T(n)=O(\log{n})$. As a result, computing GMIS for all colors takes $O(\log^2{n})$ time.
\end{proof}

\section{Computing an MIS in Linear Hypergraphs}

\subsection{The Algorithm}

In this section, we introduce a randomized distributed algorithm that solves classical MIS in linear hypergraphs, within poly-logarithmic time. As previously mentioned, it is based on a parallel algorithm originally developed by \L{}uczak and Szyma\'nska~\cite{luczak97}. Nonetheless, we have adjusted the algorithm and refined the detailed analysis accordingly, so as to greatly reduce the number of input parameters the algorithm depends upon, thus ensuring the high efficiency of this distributed variant, even in the CONGEST model.

Throughout this section, we restrict our attention to hypergraphs with diameter $O(\log{n})$, as MIS for hypergraphs with larger diameter can be computed with a poly-logarithmic time complexity overhead, due to Lemma \ref{lemma-decomp}.

Before presenting the algorithm, we introduce some relevant notations. For an $n$-node hypergraph $\mathcal{H}=(\mathcal{V},\mathcal{E})$, define $U_{i}=U_{i}(\mathcal{H})=\{e\ |\ e\in\mathcal{E},|e|=i\}$, and $u_{i}=u_{i}(\mathcal{H})=|U_{i}|$. That is, $U_{i}$ is the set of dimension $i$ hyperedges, and $u_{i}$ is the cardinality of set $U_{i}$. For a node $v\in\mathcal{V}$, define $d_i(v)=d_i(v,\mathcal{H})=|\{e\ |\ e\in U_i,v\in e\}|$. That is, $d_i(v)$ is number of dimension $i$ hyperedges that contain $v$. It is easy to see $\sum_{v\in\mathcal{V}}{d_i(v)}=iu_i$, which in turn implies the average value of $d_i(v)$ is $iu_i/n$. We say hypergraph $\mathcal{H}$ is \emph{equitable} if either $n\leq c_{eq}$ for some sufficiently large constant $c_{eq}$, or for every $i\leq\log{n}$ we have $d_i(v)\leq(iu_i/n)\cdot\log^5{n}$. (That is, for a sufficiently large hypergraph, it is ``equitable'' iff for every $i\leq\log{n}$, each node's ``dimension $i$ degree'' is not much larger than the ``average dimension $i$ degree''.)

The high level idea of the algorithm is not complicated: we initialize the independent set $\mathcal{I}$ as an empty set, and then gradually add nodes to $\mathcal{I}$; meanwhile, we also remove nodes that would violate the independence requirement if appended to $\mathcal{I}$. More specifically, the algorithm contains multiple iterations, each of which contains three parts. In the first part, we find a large equitable subhypergraph $\mathcal{H}'$ by continuously removing nodes that deviate a lot from the current average $d_i(v)$ for some $i\leq\log{n'}$, along with all the hyperedges containing any of the removed nodes. (Here, $n'$ is the number of nodes in $\mathcal{H}'$.) In the second part, we add some nodes in $\mathcal{H}'$ into a candidate set $\mathcal{W}$. The detailed rule depends on a parameter $\hat{a}$: in case $\hat{a}$ is small, we only add one special node into $\mathcal{W}$; otherwise, we add each node into $\mathcal{W}$ independently with probability $\min\{\hat{a},e^{-6}\}$. (More details regarding $\hat{a}$ will be given shortly.) Then, we remove from $\mathcal{W}$ all nodes that produce some hyperedge in $\mathcal{H}'$. The resulting set $\mathcal{I}'$ is an independent set of $\mathcal{H}'$. In the last part, we add nodes in $\mathcal{I}'$ to $\mathcal{I}$, and remove them from $\mathcal{H}$. We also remove from $\mathcal{H}$ all nodes $v\notin\mathcal{W}$ for which there exists a hyperedge $e\in\mathcal{E}$ such that $e\subseteq\{v\}\cup\mathcal{I}'$, as these nodes surely cannot be added to $\mathcal{I}$.

When implementing the above algorithm, there are some details worth clarifying.

In \L{}uczak and Szyma\'nska's original algorithm, in each iteration, in the equitable subhypergraph $\mathcal{H}'$, the aforementioned parameter $\hat{a}$ is a real value satisfying $n'/\log^8{n'}\leq\sum_{i\geq 2}{i\cdot u_i(\mathcal{H}')\cdot \hat{a}^{i-1}}\leq 2n'/\log^8{n'}$. According to this definition, to obtain $\hat{a}$, we might have to collect $\Theta(n')$ different $u_i(\mathcal{H}')$ values, resulting unacceptable time consumption. Instead, in our variant, $\hat{a}$ is defined to be a real value satisfying $n'/\log^8{n'}\leq\sum_{i=2}^{\log{n'}}{i\cdot u_i(\mathcal{H}')\cdot \hat{a}^{i-1}}\leq 2n'/\log^8{n'}$, which can be obtained much more efficiently. (We have refined the analysis to ensure correctness is still guaranteed with this updated definition.)

On the other hand, within each iteration, to calculate the value of $\hat{a}$, we need to know $n'$, as well as $u_i(\mathcal{H}')$ for each $2\leq i\leq\log{n'}$. To obtain these values, our strategy is to first elect a leader in the server-client representation of $\mathcal{H}$ (the leader can be either a server or a client), and then build a BFS tree with the root being the leader. Once the tree is built, we use aggregation to allow the root to obtain the needed values. Finally, the root broadcasts these values to all other nodes. Our procedures for accomplishing the above tasks are mostly based on the standard algorithms described in Chapter 3 and 5 of Peleg's book~\cite{peleg00}. (See Appendix \ref{appdix-obtain-para} for more details.) It is also worth noting that we cannot simply build a tree in $\mathcal{H}'$, as it might be not connected at all. Hence, during aggregation, nodes not in $\mathcal{H}'$ can simply forward values without updating them, this ensures the final results are obtained with respect to $\mathcal{H}'$.

The detailed algorithm is provided in Figure \ref{fig-protocol-linear-mis-distributed}. For simplicity, we only show the pseudocode for server nodes, and omit the pseudocode for client nodes.

\begin{figure}[t]
\hrule
\vspace{1ex}\textbf{Pseudocode executed at a node $v$ in $\mathcal{H}$:}\vspace{1ex}
\hrule
\begin{small}
\begin{algorithmic}[1]
\State $state\gets active$
\For {($l_1\gets 1$ to $\Theta(\log^{18}{n})$)}
	\If {($state\neq included$ \textbf{and} $state\neq excluded$)}
		\State $state\gets active$ \Comment If $v$ has not decided then join this iteration
	\EndIf
	\Statex \hspace{3ex}$\blacktriangleright$ \textsc{Part I}: Create equitable subhypergraph
	\State $\hat{n}\gets\texttt{CountNode}()$ \Comment Count nodes that are in $active$ state in $O(\log{n})$ time
	\For {($l_2\gets 1$ to $\Theta(\log^{2}{\hat{n}})$)}
		\State $n'\gets\texttt{CountNode}()$
		\State $\texttt{CountUi}()$ \Comment Count $u_i$ for $2\leq i\leq\log{n'}$ in $O(\log^2{n})$ time
		\If {($\texttt{CheckEq()}=true$)} \Comment Check if hypergraph is equitable in $O(\log{n})$ time
			\State \textbf{continue}
		\ElsIf {($state=active$ \textbf{and} $d_i(v)>\frac{iu_i}{n'}\cdot\log^4{n'}$ for some $i\leq\log{n'}$)}
			\State $state\gets idle$
			\State Inform adjacent client nodes about $v$ becoming $idle$ for this iteration
		\EndIf
	\EndFor
	\If {($state=idle$)} \Comment Ignore part two if $v$ is not in the equitable subhypergraph
		\State \textbf{goto} \textsc{Part III}
	\EndIf
	\Statex \hspace{3ex}$\blacktriangleright$ \textsc{Part II}: Generate an independent set
	\State Compute $\hat{a}$ such that $\frac{n'}{\log^8{n'}}\leq\sum_{i=2}^{\log{n'}}{i\cdot u_i\cdot \hat{a}^{i-1}}\leq\frac{2n'}{\log^8{n'}}$
	\State $p_0\gets\min\{\hat{a},e^{-6}\}$
	\If {($p_0\leq\frac{\log^8{n'}}{n'}$)}
		\State $v'\gets\texttt{MaxD2}()$ \Comment $\texttt{MaxD2}$ finds an active node $v'$ that maximizes $d_2(v')$ in $O(\log{n})$ time
		\If {($v=v'$)}
			\State $state\gets elected$
		\EndIf
	\ElsIf {($\texttt{Random(0,1)}\leq p_0$)} \Comment $\texttt{Random}(x,y)$ samples a random real value in $[x,y]$
		\State $state\gets elected$
	\EndIf
	\If {(there is no adjacent client $e$ s.t.\ all active servers connected to $e$ are $elected$)}
		\State $state\gets included$ \Comment $v$ decides to join the MIS
	\EndIf
	\Statex \hspace{3ex}$\blacktriangleright$ \textsc{Part III}: Update the hypergraph
	\If {($state=included$)}
		\State Inform adjacent client nodes about $v$ deciding to join the MIS
	\EndIf
	\If {(there is an adjacent client $e$ s.t.\ except $v$ all servers connected to $e$ are $included$)}
		\State $state\gets excluded$ \Comment $v$ decides to not join the MIS
	\EndIf
	\State Inform adjacent client nodes about $v$'s decision if it has decided
\EndFor
\end{algorithmic}
\end{small}
\hrule\vspace{1ex}
\caption{\textbf{Pseudocode executed at a node in $\mathcal{H}$ for computing MIS.}}\label{fig-protocol-linear-mis-distributed}
\vspace{-3ex}
\end{figure}

\subsection{The Analysis}

From the pseudocode it is easy to see the runtime of the algorithm is poly-logarithmic. Therefore, in this part, we focus on proving the correctness of the algorithm.

To begin with, we state two important observations.

\begin{fact}\label{fact-subhypergraph-mis}
Let $\mathcal{H}'$ be subhypergraph of $\mathcal{H}$, an independent set of $\mathcal{H}'$ is independent in $\mathcal{H}$ too.
\end{fact}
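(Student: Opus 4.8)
The plan is to argue by contradiction, leaning directly on the definition of the induced subhypergraph given in the model section. Recall that for $\mathcal{H}'=(\mathcal{V}',\mathcal{E}')$ induced by $\mathcal{V}'\subseteq\mathcal{V}$ we have $\mathcal{E}'=\{e\mid e\in\mathcal{E},\,e\subseteq\mathcal{V}'\}$, i.e.\ $\mathcal{E}'$ consists of \emph{all} hyperedges of $\mathcal{H}$ that happen to lie entirely inside $\mathcal{V}'$. I would also restate the independence criterion in its contrapositive form: a set $\mathcal{I}$ is independent in a hypergraph exactly when no hyperedge of that hypergraph is fully contained in $\mathcal{I}$. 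These two reformulations make the claim essentially transparent.

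Now let $\mathcal{I}$ be an independent set of $\mathcal{H}'$; in particular $\mathcal{I}\subseteq\mathcal{V}'$. Suppose, for contradiction, that $\mathcal{I}$ is \emph{not} independent in $\mathcal{H}$. Then some hyperedge $e\in\mathcal{E}$ satisfies $e\subseteq\mathcal{I}$. Since $\mathcal{I}\subseteq\mathcal{V}'$, this gives $e\subseteq\mathcal{V}'$, and hence $e\in\mathcal{E}'$ by the definition of the induced subhypergraph. But then $e$ is a hyperedge of $\mathcal{H}'$ with $e\subseteq\mathcal{I}$, contradicting the assumption that $\mathcal{I}$ is independent in $\mathcal{H}'$. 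Therefore no such $e$ exists and $\mathcal{I}$ is independent in $\mathcal{H}$.

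There is no genuine technical obstacle here; the statement is immediate once the definitions are unfolded. The only point that deserves care, and the one I would emphasize, is \emph{why} it works: it is precisely because the subhypergraph is \emph{induced} that every hyperedge sitting inside $\mathcal{V}'$ is retained in $\mathcal{E}'$. Had the subhypergraph been allowed to drop some such hyperedges, the inference ``$e\subseteq\mathcal{V}'\Rightarrow e\in\mathcal{E}'$'' would fail and the fact could break. This is worth flagging because, as the introduction foreshadows for the GMIS setting, the analogous closure property does \emph{not} hold for generalized independent sets, which is exactly what forces a revised notion of subhypergraph later on.
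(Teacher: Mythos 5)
Your proof is correct: the paper states this fact without any proof, treating it as an immediate consequence of the definitions, and your argument is exactly the definition-unfolding that justifies it (an edge fully inside $\mathcal{I}\subseteq\mathcal{V}'$ lies in $\mathcal{E}'$ because the subhypergraph is induced, contradicting independence in $\mathcal{H}'$). Your closing remark about why the closure property fails for generalized independent sets is also on point --- it is precisely the issue the paper later addresses with the notion of \emph{strict subhypergraph}.
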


\begin{fact}\label{fact-linear-hypergraph-edge-num}
A linear hypergraph $\mathcal{H}$ containing $n$ nodes has at most ${n\choose 2}$ hyperedges.
\end{fact}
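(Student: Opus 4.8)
The plan is to give a short injective-counting argument that charges each hyperedge to a distinct pair of nodes. Recall that throughout the paper hyperedges have size at least two, so every hyperedge $e\in\mathcal{E}$ contains at least one unordered pair of distinct nodes, i.e.\ ${|e|\choose 2}\geq 1$. The idea is to exploit linearity to show that no such pair can be ``shared'' by two different hyperedges, and therefore the number of hyperedges cannot exceed the number of available pairs, which is ${n\choose 2}$.

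Concretely, I would first observe the crucial consequence of linearity: if two \emph{distinct} hyperedges $e,e'\in\mathcal{E}$ both contained some pair $\{u,v\}$ of nodes, then $\{u,v\}\subseteq e\cap e'$, so $|e\cap e'|\geq 2$, contradicting the defining property that any two hyperedges overlap on at most one node. Hence each unordered pair $\{u,v\}\subseteq\mathcal{V}$ is contained in \emph{at most one} hyperedge.

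Next I would turn this into a counting bound. For each hyperedge $e$ pick (arbitrarily) one pair $\phi(e)\in{e\choose 2}$; this is possible since $|e|\geq 2$. The map $\phi\colon\mathcal{E}\to{\mathcal{V}\choose 2}$ is injective: if $\phi(e)=\phi(e')=\{u,v\}$, then both $e$ and $e'$ contain $\{u,v\}$, so by the previous paragraph $e=e'$. Since the codomain ${\mathcal{V}\choose 2}$ has exactly ${n\choose 2}$ elements, an injection from $\mathcal{E}$ into it forces $|\mathcal{E}|\leq{n\choose 2}$. (Equivalently, one can double count: the sets ${e\choose 2}$ over $e\in\mathcal{E}$ are pairwise disjoint and each nonempty, so their number is bounded by the total number of pairs.)

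I do not expect a genuine obstacle here: the statement is essentially a restatement of linearity at the level of pairs rather than nodes. The only point that requires care is making explicit that distinct hyperedges map to distinct pairs, which is exactly where the linearity hypothesis (overlap $\leq 1$ node) is used; once that is stated, the bound ${n\choose 2}$ is immediate.
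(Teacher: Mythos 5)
Your proof is correct, but it takes a genuinely different route from the paper's. You give a direct injective-counting argument: linearity says no unordered pair of nodes can lie in two distinct hyperedges, so picking any pair inside each hyperedge (possible since $|e|\geq 2$) defines an injection from $\mathcal{E}$ into $\binom{\mathcal{V}}{2}$, giving $|\mathcal{E}|\leq\binom{n}{2}$ immediately. The paper instead runs a recursive splitting procedure: every hyperedge of size at least three is partitioned into two roughly equal halves (discarding any singleton part), the hyperedge count never decreases, and after iterating one is left with a graph on $n$ nodes whose edge count bounds the original. Your argument is shorter and, importantly, makes fully explicit the one place linearity is needed --- namely that distinct hyperedges map to distinct pairs; the paper's proof leaves implicit why the terminal object is a \emph{simple} graph rather than a multigraph (this also follows from linearity, since the final size-two pieces of distinct original hyperedges can share at most one node, but the paper does not say so). The paper's splitting device is not wasted, though: essentially the same construction reappears later to bound $\sum_{i\geq 3}iu_i$ in the analysis of the small-$\hat{a}$ case, so the authors get some reuse out of it, whereas your injection is tailored to exactly this counting statement.
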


\begin{proof}
To see this, consider an arbitrary hyperedge $e$ in the hypergraph. If $|e|\geq 3$, then we split $e$ into two hyperedges $e_1$ and $e_2$ such that $||e_1|-|e_2||\leq 1$. If $|e_1|$ (or $|e_2|$) is of size one, then we remove $e_1$ (or $|e_2|$). (It cannot be the case that both $e_1$ and $e_2$ have size one since we require $|e|\geq 3$.) Notice, this procedure does not decrease the number of hyperedges in the hypergraph. Now, if we apply this procedure on all hyperedges recursively, we will eventually have a simple graph containing $n$ nodes. Since a simple graph with $n$ nodes has at most ${n\choose 2}$ edges, the claim is proved.
\end{proof}

The following first key technical lemma shows that within each iteration of the main algorithm, after part one, we have generated a large equitable subhypergraph containing at least half of the undecided nodes.

\begin{lemma}[Adopted from Claim 1 in \cite{luczak97}]\label{lemma-linear-mis-large-eq}
Assume at the beginning of an iteration there are $\hat{n}$ nodes in $\mathcal{H}$ that still have not decided whether to join the MIS or not. Then, after part one of this iteration, there are at least $\hat{n}/2$ nodes in $active$ state, and they induce an equitable subhypergraph.
\end{lemma}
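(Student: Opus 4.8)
The plan is to prove the two assertions separately: first that at least $\hat{n}/2$ nodes survive Part~I in the $active$ state, and second that the subhypergraph they induce is equitable. Both hinge on analysing the inner cleaning loop (the $l_2$-loop), which repeatedly turns a node $idle$ once its dimension-$i$ degree exceeds $\frac{i u_i}{n'}\log^4 n'$ for some $i\le\log n'$.

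For the count, I would bound the number of nodes deactivated in a single pass of the inner loop. Fix $i$ and recall that $\sum_{v} d_i(v)=i u_i$, so by an averaging (Markov) argument the number of nodes with $d_i(v)>\frac{i u_i}{n'}\log^4 n'$ is strictly less than $n'/\log^4 n'$. Summing over the at most $\log n'$ relevant values of $i$ shows that each pass deactivates fewer than $n'/\log^3 n'$ nodes. I would then argue by a minimal-counterexample (or induction on the pass index) that the active count never drops below $\hat{n}/2$: as long as it stays above $\hat{n}/2$, each pass removes at most an $O(1/\log^3\hat{n})$ fraction, so over the $\Theta(\log^2\hat{n})$ passes the cumulative loss is $O(\hat{n}/\log\hat{n})\ll\hat{n}/2$. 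The edge case $\hat{n}\le c_{eq}$ is handled directly by the definition of equitable, since then nothing is deactivated. This part is routine once the per-pass bound is in place.

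For equitability I would first record the stabilisation structure of the loop: whenever \texttt{CheckEq} reports the current active subhypergraph equitable, the loop merely executes \textbf{continue}, so no node is deactivated and the configuration is frozen for the rest of Part~I. A structural fact I would exploit is that removing nodes together with their incident hyperedges can only decrease each surviving node's $d_i(v)$, so the deactivation criterion behaves monotonically. Moreover, at a fixed point of the $\log^4$-cleaning no surviving node exceeds the $\log^4$ threshold relative to the current averages $\frac{i u_i}{n'}$, and since $\log^4 n'\le\log^5 n'$ this a fortiori satisfies the equitability condition; thus any fixed point of the cleaning is automatically equitable. Consequently the entire question reduces to showing that a fixed point is reached within the $\Theta(\log^2\hat{n})$ passes available, with the gap between the $\log^4$ deactivation threshold and the $\log^5$ equitability threshold providing the slack that absorbs the shift in the averages caused by removals.

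The hard part will be bounding the number of deactivation passes needed before a fixed point is attained, because a single non-equitable pass may remove only a tiny fraction of the nodes, so the naive ``at least one node per pass'' estimate is far too weak for the $\Theta(\log^2\hat{n})$ budget. Here I would attempt a potential-function argument — for instance tracking the total incidence count $\sum_{i} i\,u_i$ (equivalently $\sum_v\deg(v)$) or the number of surviving hyperedges — and show that every non-equitable pass makes definite progress in this potential, so that after $\Theta(\log^2\hat{n})$ passes no non-equitable configuration can remain. Since this is precisely the delicate combinatorial core of \L{}uczak and Szyma\'nska's Claim~1, I expect it to require the linear-hypergraph hypothesis (two hyperedges share at most one node, which limits how much a single removal can distort the per-dimension averages) together with careful bookkeeping across the $O(\log n')$ dimensions, and I regard this round-count bound as the main obstacle of the proof.
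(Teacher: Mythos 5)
Your treatment of the size bound is correct and matches the paper: Markov's inequality gives at most $\hat n'/\log^4\hat n'$ over-threshold nodes per dimension, hence at most $\hat n'/\log^3\hat n'$ deactivations per pass, and over $\Theta(\log^2\hat n)$ passes the total loss is $O(\hat n/\log\hat n)\ll\hat n/2$. You have also correctly located the crux --- bounding the number of passes before a fixed point is reached --- and correctly guessed that the slack between the $\log^4$ deactivation threshold and the $\log^5$ equitability threshold is what makes it work. But you then stop: you declare the round-count bound ``the main obstacle'' and only gesture at a potential function ($\sum_i i\,u_i$ or the total hyperedge count) without showing it makes \emph{quantifiable} progress per pass. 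That is precisely the step the proof cannot do without, and your proposed global potential is unlikely to deliver it, since a non-equitable pass may deactivate a single node and delete very few hyperedges in absolute terms.

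The missing argument is a \emph{per-dimension multiplicative} drop. Suppose a pass starts non-equitable and ends non-equitable. The ending witness is a surviving node $v$ and a dimension $i$ with $d_i'(v)>(i u_i'/\hat n'')\log^5\hat n''$ after the pass; since $v$ survived, it satisfied $d_i(v)\le (i u_i/\hat n')\log^4\hat n'$ before the pass; and deleting nodes only removes incident hyperedges, so $d_i'(v)\le d_i(v)$ and $\hat n''\le\hat n'$. Chaining these, if $u_i'\ge u_i/\log\hat n'$ one gets $d_i'(v)>(i u_i/\hat n'')(1/\log\hat n')\log^5\hat n''\ge(i u_i/\hat n')(1/\log\hat n')\log^5\hat n'\ge d_i(v)\ge d_i'(v)$, a contradiction. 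Hence every non-equitable-to-non-equitable pass shrinks some $u_i$ by a factor of at least $\log\hat n'$. Combined with Fact~\ref{fact-linear-hypergraph-edge-num} (this is the only place linearity enters --- it gives $u_i<\hat n^2$ initially, not any control on how removals ``distort the averages''), each of the $\le\log\hat n$ dimensions can absorb only $O(\log\hat n)$ such drops before $u_i<1$, so $O(\log^2\hat n)$ passes suffice, and once every $u_i<1$ for $i\le\log\hat n$ the subhypergraph is vacuously equitable. Without this multiplicative-decay step your proof is incomplete; with it, it coincides with the paper's.
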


\begin{proof}
During part one, we have a loop which contains $\Theta(\log^{2}{\hat{n}})$ inner iterations. Assume there are $\hat{n}'$ $active$ nodes at the beginning of an inner iteration. Now, if $active$ nodes do not form an equitable subhypergraph, then within this inner iteration, each $active$ node $v$ will check whether $d_i(v)>(iu_i/\hat{n}')\cdot\log^4{\hat{n}'}$ for some $i\leq\log{\hat{n}'}$. If such $d_i(v)$ exists, then $v$ will set itself as $idle$, and inform adjacent client nodes (so that these hyperedges will not be in the equitable subhypergraph).

We now argue, if at the beginning of an inner iteration, the $\hat{n}'$ $active$ nodes do not form an equitable hypergraph, and by the end of this inner iteration, the updated hypergraph is still not equitable, then by the end of this inner iteration, for some $i$ where $i\leq\log{\hat{n}'}$, the number of active dimension $i$ hyperedges decrease by at least a factor of $\log{\hat{n}'}$. To see this, notice that for such an event to happen, there must exist some node $v$ and some $i\leq\log{\hat{n}'}$ such that $d_i(v)\leq(iu_i/\hat{n}')\cdot\log^4{\hat{n}'}$ prior to this inner iteration, and $d'_i(v)>(iu'_i/\hat{n}'')\cdot\log^5{\hat{n}''}$ after this inner iteration. Notice, according to the definition, we know $d'_i(v)\leq d_i(v)$. If $u_i$ decrease by a factor less than $\log{\hat{n}'}$, then we know $d'_i(v)>(iu'_i/\hat{n}'')\cdot\log^5{\hat{n}''}>(iu_i/\hat{n}'')\cdot(1/\log{\hat{n}'})\cdot\log^5{\hat{n}''}\geq(iu_i/\hat{n}')\cdot(1/\log{\hat{n}'})\cdot\log^5{\hat{n}'}\geq d_i(v)$, a contradiction.

We then argue $\Theta(\log^{2}{\hat{n}})$ inner iterations are enough to generate an equitable subhypergraph. Assume prior to the first inner iteration, we have $x_i$ dimension $i$ hyperedges, where $1\leq i\leq\log{\hat{n}}$. Due to Fact \ref{fact-linear-hypergraph-edge-num}, we know $\sum_{i=1}^{\log{\hat{n}}}{x_i}\leq{\hat{n}\choose 2}<\hat{n}^2$, implying $x_i<\hat{n}^2$. After each inner iteration, either we have an equitable subhypergraph, or number of active dimension $i$ hyperedges is decreased by at least a factor of $\log{\hat{n}'}$ for some $i$, where $\hat{n}'$ is the number of active nodes prior to this inner iteration. Notice, once the number of active dimension $i$ hyperedges drops below one for all $i\leq\log{\hat{n}}$, the resulting subhypergraph must be equitable. On the other hand, for $x_i$ to drop below one, it is easy to see we need at most $O(\log{\hat{n}})$ inner iterations. Hence, the total number of inner iterations we need is at most $O(\log^2{\hat{n}})$.

Finally, we argue that the equitable subhypergraph generated by part one contains at least $\hat{n}/2$ nodes. To see this, notice that for arbitrary $i$, prior to an inner iteration, if there are $\hat{n}'$ $active$ nodes in total, then there are at most $\hat{n}'/\log^4{\hat{n}'}$ nodes satisfying $d_i(v)>(iu_i/\hat{n}')\cdot\log^4{\hat{n}'}$. Hence, during this iteration, we set at most $\hat{n}'/\log^3{\hat{n}'}=O(\hat{n}/\log^3{\hat{n}})$ $active$ nodes to $idle$. Since there are only $O(\log^2{\hat{n}})$ inner iterations, we know after part one, the generated equitable subhypergraph contains at least $\hat{n}/2$ nodes.
\end{proof}

In the following discussion, we focus on part two and three of each iteration. In particular, we show that if the generated equitable subhypergraph contains $c_1\leq n'\leq n$ nodes, then after part two and three, with at least some constant probability, at least $n'/\log^{17}{n}$ previously undecided nodes will make up their minds. Here, $c_1$ is a sufficiently large positive constant.

To prove the above claim, we consider three cases, depending on the value of $\hat{a}$.

The first case focuses on the scenario where $\hat{a}\leq\log^8{n'}/n'$. In such situation, there must exist a node $u$ that is contained within a lot of size two hyperedges. Thus, by letting $u$ join the MIS, the other nodes in these size two hyperedges will decide to not join the MIS.

\begin{lemma}[Adopted from Case 1 of Lemma 1 in \cite{luczak97}]\label{lemma-linear-mis-remove-nodes-case1}
Assume after part one of an iteration there are $n'$ nodes in the generated equitable subhypergraph. Further assume $\hat{a}\leq\log^8{n'}/n'$. Then, after part three of this iteration, at least $\Theta(n'/\log^{17}{n'})$ previously undecided nodes will decide whether to join the MIS or not.
\end{lemma}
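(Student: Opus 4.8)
The plan is to exploit that in this case the algorithm is \emph{deterministic}: exactly one node $v'$ --- the active node maximizing $d_2(v')$ returned by $\texttt{MaxD2}$ --- is elected, it joins the MIS, and every node sharing a dimension-$2$ hyperedge with $v'$ is then forced out in Part III. So I would first verify that $v'$ indeed joins, then lower-bound $d_2(v')$, and finally count the forced exclusions.

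First I would check that the algorithm enters the $\texttt{MaxD2}$ branch. Since $\hat{a}\le\log^8 n'/n'$ and $n'\ge c_1$ for a sufficiently large constant $c_1$, we have $\log^8 n'/n'<e^{-6}$, so $p_0=\min\{\hat{a},e^{-6}\}=\hat{a}\le\log^8 n'/n'$ and the test $p_0\le\log^8 n'/n'$ passes. Hence the unique maximizer $v'$ of $d_2$ among active nodes is the only elected node, i.e.\ $\mathcal{W}=\{v'\}$. Because every hyperedge has size at least two, a single marked node cannot fill any hyperedge, so $v'$ is not removed from $\mathcal{W}$, giving $\mathcal{I}'=\{v'\}$, and $v'$ becomes $included$ (it joins the MIS).

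The heart of the proof is lower-bounding $d_2(v')$. I would start from the defining inequality of $\hat{a}$, namely $\sum_{i=2}^{\log n'} i\,u_i\,\hat{a}^{i-1}\ge n'/\log^8 n'$, and argue that the $i=2$ term carries essentially all of this mass. Using the linearity bound ${i\choose 2}u_i\le{n'\choose 2}$ (each pair of nodes lies in at most one hyperedge) together with $\hat{a}\le\log^8 n'/n'$, each term with $i\ge 3$ is at most $\frac{n'^{\,3-i}}{i-1}\log^{8(i-1)}n'$; these decay geometrically in $i$ with ratio $O(\log^8 n'/n')$, so $\sum_{i\ge 3} i\,u_i\,\hat{a}^{i-1}=O(\log^{16}n')$. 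For $n'$ large enough this is at most half of $n'/\log^8 n'$, whence $2u_2\hat{a}\ge n'/(2\log^8 n')$. Solving for $u_2$ with $\hat{a}\le\log^8 n'/n'$ gives $u_2=\Omega(n'^2/\log^{16}n')$, and since $\sum_{v}d_2(v)=2u_2$, averaging yields $d_2(v')\ge 2u_2/n'=\Omega(n'/\log^{16}n')$.

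Finally I would convert this degree bound into decided nodes. By linearity, the $d_2(v')$ dimension-$2$ hyperedges through $v'$ have pairwise distinct partners, all of which are active (hence previously undecided) nodes of the equitable subhypergraph. For each such partner $w$, the hyperedge $\{v',w\}$ satisfies $\{v',w\}\subseteq\{w\}\cup\mathcal{I}'$ with $w\notin\mathcal{W}$, so $w$ is excluded in Part III. Thus at least $\Omega(n'/\log^{16}n')$ previously undecided nodes decide --- more than the claimed $\Omega(n'/\log^{17}n')$ --- and this occurs deterministically (with probability $1$). The main obstacle is the third step: showing that the higher-dimensional terms are negligible so that the mass of the sum concentrates on $u_2$; once the linearity fact (already proved as Fact~\ref{fact-linear-hypergraph-edge-num}'s underlying counting) is in hand, the remainder is routine bookkeeping.
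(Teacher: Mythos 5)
Your proof is correct and follows essentially the same route as the paper's: show the tail $\sum_{i\ge 3} i u_i \hat a^{i-1}$ is $O(\log^{16} n')$ so that $u_2=\Omega((n')^2/\log^{16} n')$, conclude the maximizer $v'$ has $d_2(v')\ge 2u_2/n'=\Omega(n'/\log^{16}n')$, and observe that each of its (pairwise distinct, by linearity) dimension-two partners is excluded in Part III. The only difference is cosmetic: you bound the tail via $u_i\binom{i}{2}\le\binom{n'}{2}$ termwise, whereas the paper bounds $\sum_{i\ge3} iu_i\le 3\binom{n'}{2}$ by a hyperedge-splitting argument and factors out $\hat a^2$; both yield the same estimate.
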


\begin{proof}
First, notice that $\sum_{i\geq 3}{(iu_i\cdot\hat{a}^{i-1})}\leq\hat{a}^2\sum_{i\geq 3}{iu_i}$ when $\hat{a}\leq\log^8{n'}/n'$.

We then argue, the value of $\sum_{i\geq3}iu_i$ is at most $3\cdot{n'\choose 2}\leq(3/2)\cdot(n')^2$. To see this, we interpret $\sum_{i\geq3}iu_i$ as the sum of \emph{G3-degrees} of all the $n'$ nodes in the equitable hypergraph. Here, for a node, the G3-degree is defined as the degree of it when counting hyperedges with dimension at least three. Now, to count the sum of G3-degrees, consider the following procedure. Take an arbitrary hyperedge $e$ in the hypergraph with dimension at least three, we split $e$ into two hyperedges $e_1$ and $e_2$ such that $|e_1|=2$. If $|e_2|$ is one, then we remove $e_2$. Notice, if we apply this procedure on all hyperedges with dimension at least three recursively, we will eventually have a simple graph containing $n'$ nodes. Moreover, during the above procedure, the sum of degrees of all nodes always upper bounds the sum of G3-degrees of the original hypergraph. The only exception is that when we have a hyperedge of size one, it is removed, and this decreases the sum by one. Since such bad event can happen at most once for each of the at most ${n'\choose 2}$ hyperedges in the original hypergraph, and since for a simple graph with $n'$ nodes, the sum of all nodes' degree is at most $n'(n'-1)$, we know $\sum_{i\geq3}iu_i\leq n'(n'-1)+{n'\choose 2}= 3\cdot{n'\choose 2}$.

With the above fact, we can now conclude:

\vspace{-2ex}
\begin{align*}
\sum_{i\geq 3}{\left(iu_i\cdot\hat{a}^{i-1}\right)}\leq\hat{a}^2\cdot\sum_{i\geq 3}{iu_i}\leq\left(\log^{16}{n'}/(n')^2\right)\cdot(3/2)\cdot(n')^2=(3/2)\cdot\log^{16}{n'}
\end{align*}

Hence, we know:

\vspace{-2ex}
\begin{align*}
u_2 &= (1/2\hat{a})\cdot\left(\sum_{i=2}^{\log{n'}}{(iu_i\cdot\hat{a}^{i-1})}-\sum_{i=3}^{\log{n'}}{(iu_i\cdot\hat{a}^{i-1})}\right) \\
&\geq (1/2\hat{a})\cdot\left(n'/\log^8{n'}-(3/2)\cdot\log^{16}{n'}\right) \\
&\geq \left(n'/(2\log^8{n'})\right)\cdot\left(n'/\log^8{n'}-(3/2)\cdot\log^{16}{n'}\right) \\
&\geq (n')^2/(3\log^{16}{n'})
\end{align*}

Notice, the last inequality holds when $n'$ is sufficiently large.

Hence, assuming $v$ maximizes $d_2$, we have $d_2(v)\geq 2u_2/n'\geq 2n'/(3\log^{16}{n'})\geq n'/\log^{17}{n'}$.

Now, notice during part three, for each of the $d_2(v)$ dimension two hyperedges that contain node $v$, the other node in the hyperedge will decide to not be in the MIS (since $v$ is already in the MIS). As a result, we remove at least $d_2(v)$ nodes.
\end{proof}

The second case focuses on the scenario where $\log^8{n'}/n'\leq\hat{a}\leq e^{-6}$. This is the most involved situation. Since we have adjusted the definition of $\hat{a}$, when compared with the original proof provided in \cite{luczak97}, a refined and more careful analysis is needed to show the correctness of the following lemma. At a high-level, the proof is organized in the following way. Let $\mathcal{W}$ be the set of $elected$ nodes. We first show that with at least constant probability, there are lots of hyperedges $e$ in the equitable subhypergraph satisfying $|e\cap\mathcal{W}|=|e|-1$. Then, we prove that most of these hyperedges are vertex-disjoint and do not intersect with the hyperedges that are entirely contained in $\mathcal{W}$. Therefore, for most of the hyperedges $e$ satisfying $|e\cap\mathcal{W}|=|e|-1$, at least one node in $e$ will decide to not join the MIS.

\begin{lemma}\label{lemma-linear-mis-remove-nodes-case2}
Assume after part one of an iteration there are $n'$ nodes in the generated equitable subhypergraph. Further assume $\log^8{n'}/n'\leq\hat{a}\leq e^{-6}$. Then, after part three of this iteration, with at least constant probability, at least $\Theta(n'/\log^{8}{n'})$ previously undecided nodes will decide whether to join the MIS or not.
\end{lemma}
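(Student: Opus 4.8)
The plan is to analyze the random choices of Part~II through a first- and second-moment argument on the number of \emph{almost elected} hyperedges, and then convert these into distinct decided nodes. Throughout, write $p=\hat a$ (so $p_0=\min\{\hat a,e^{-6}\}=\hat a$ in this case) and let $\mathcal{W}$ be the set of $elected$ nodes, where every node of the equitable subhypergraph joins $\mathcal{W}$ independently with probability $p$. Call a hyperedge $e$ with $2\le|e|\le\log n'$ \emph{small}, and observe first that hyperedges of size larger than $\log n'$ are, w.h.p., never fully elected: each is fully elected with probability at most $p^{\log n'}\le (e^{-6})^{\log n'}\le(n')^{-6}$, and by Fact~\ref{fact-linear-hypergraph-edge-num} there are at most ${n'\choose 2}$ of them, so their total contribution vanishes. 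Hence I only track small hyperedges. Let $X$ count the small hyperedges $e$ with $|e\cap\mathcal{W}|=|e|-1$ (exactly one node of $e$ missing from $\mathcal{W}$). Using the definition of $\hat a$,
\[
\mathbf{E}[X]=(1-p)\sum_{i=2}^{\log n'} i\,u_i\,\hat a^{\,i-1}\ \ge\ (1-e^{-6})\cdot\frac{n'}{\log^8 n'}\ =\ \Omega\!\left(\frac{n'}{\log^8 n'}\right).
\]

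Next I would show $X=\Omega(n'/\log^8 n')$ holds with constant probability via Chebyshev's inequality, which is where the structure of the hypergraph becomes essential. The indicator events for two disjoint hyperedges are independent, and linearity forces any two distinct hyperedges to share at most one node, so $\mathrm{Var}(X)$ is governed by the diagonal term (at most $\mathbf{E}[X]$) plus the covariances of pairs $e,f$ with $|e\cap f|=1$. Grouping such pairs by their shared node $v$ and bounding $\sum_v d_i(v)d_j(v)\le (\log^5 n'/n')\, i u_i\, j u_j$ by the equitable degree bound $d_i(v)\le(iu_i/n')\log^5 n'$, the off-diagonal contribution works out to $O\!\big((n')^2/\log^{17} n'\big)$, which is $o(\mathbf{E}[X]^2)=o\big((n')^2/\log^{16}n'\big)$; here the factors of $i\le\log n'$ that appear stay within the $\log$-budget, and the lower bound $\hat a\ge\log^8 n'/n'$ is exactly what tames the stray $1/p$ into a factor $n'/\log^8 n'$. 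Thus $\mathrm{Var}(X)=o(\mathbf{E}[X]^2)$, and $X\ge\tfrac12\mathbf{E}[X]$ with probability $1-o(1)$.

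Having many almost elected hyperedges, I would convert them into distinct decided (namely $excluded$) nodes. Each such $e$ has a unique missing node $w_e\notin\mathcal{W}$; if no node of $e$ lies in a fully elected hyperedge, then every other node of $e$ survives into $\mathcal{I}'$, so $e\setminus\{w_e\}\subseteq\mathcal{I}'$ and $w_e$ becomes $excluded$ in part three. Call $e$ \emph{bad} if some node of $e$ does lie in a fully elected hyperedge, and let $B$ count bad small almost-elected hyperedges; pairing an almost elected $e$ with a fully elected $g$ and applying the same grouping-by-shared-node bound gives $\mathbf{E}[B]=o(\mathbf{E}[X])$, so $B\le\tfrac14\mathbf{E}[X]$ with probability $1-o(1)$ by Markov. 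To guarantee distinctness I count pairs of almost-elected hyperedges with a \emph{common} missing node: such a pair shares exactly that node by linearity, so both its remaining parts must be fully elected, and the analogous computation yields an expected number $o(\mathbf{E}[X])$ of colliding pairs. Since $k$ hyperedges sharing one missing node reduce the distinct count by only $k-1\le{k\choose 2}$, the number of distinct excluded nodes is at least $X-B-(\text{colliding pairs})$; intersecting the three $1-o(1)$ events shows this is $\Omega(n'/\log^8 n')$ with constant probability, as required.

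The main obstacle is the second-moment estimate in the second step: since $\mathbf{E}[X]^2\approx(n'/\log^8 n')^2$, one cannot afford to lose more than roughly a $\log n'$ factor in the pairwise covariances, so both the linear structure ($|e\cap f|\le1$, which kills disjoint-pair covariance and pins intersecting pairs to a single shared node) and the precise equitable degree bound must be used tightly, and the $1/p$ term only closes because of the standing assumption $\hat a\ge\log^8 n'/n'$. The bad-hyperedge and collision counts of the final step are technically easier variants of this computation (their exponents of $p$ never force a $1/p$), which is why they do not drive the difficulty.
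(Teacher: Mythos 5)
Your proposal is correct and follows essentially the same route as the paper's proof: a first- and second-moment (Chebyshev) bound on the number of hyperedges with exactly one unelected node, using linearity and the equitable degree bounds $d_i(v)\leq(iu_i/n')\log^5 n'$ to control the covariances, followed by a Markov bound on conflicting pairs to extract distinct excluded nodes. The only differences are organizational: you truncate to hyperedges of size at most $\log n'$ up front (the paper instead carries the large-hyperedge tail through the covariance estimate), and you split the paper's single ``conflicting pairs'' claim into two separate counts (almost-elected hyperedges touching fully elected ones, and pairs sharing a common missing node), both of which reduce to the same expectation computation.
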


\begin{proof}
Before proving the lemma, we briefly recap what part two and three do. During part two, when $\log^8{n'}/n'\leq\hat{a}\leq e^{-6}$, we sample a set of nodes $\mathcal{W}$ by choosing each node independently with probability $\hat{a}$. We then construct an independent set $\mathcal{I}\subseteq\mathcal{W}$ by removing from $\mathcal{W}$ the set of nodes that constitute some hyperedge $e\subseteq E(\mathcal{H}')$. Lastly, in part three, we let a node $v$ decide to not join the MIS if it is in some hyperedge $e$ such that every node except $v$ in $e$ has already decided to join the MIS.

To prove the lemma, we rely on two key claims. The first claim shows that with at least some constant probability there are lots of hyperedges $e$ in the equitable subhypergraph satisfying $|e\cap\mathcal{W}|=|e|-1$. The second claim shows that most of these hyperedges are vertex-disjoint and do not intersect the hyperedges that are entirely contained in $\mathcal{W}$.

Let $X$ be a random variable denoting the number of hyperedges in the equitable subhypergraph such that for each such hyperedge all but one of its nodes are in $\mathcal{W}$. The first claim, as mentioned previously, estimates the value of $X$.

\begin{claim}
With at least constant probability, $X=\Theta(n'/\log^8{n'})$.
\end{claim}

\begin{proof}
For a hyperedge $e$ in the generated equitable subhypergraph $\mathcal{H}'=(\mathcal{V}',\mathcal{E}')$, define $X_e$ to be an indicator random variable taking value one iff $|e\cap\mathcal{W}|=|e|-1$.

It is easy to see:

\vspace{-2ex}
\begin{align*}
\mathbb{E}(X) & =\sum_{e\in\mathcal{E}'}{\mathbb{E}(X_e)}=\sum_{e\in\mathcal{E}'}{\left(|e|\cdot(1-\hat{a})\cdot\hat{a}^{|e|-1}\right)} \\
& =(1-\hat{a})\cdot\sum_{e\in\mathcal{E}'}{\left(|e|\cdot\hat{a}^{|e|-1}\right)} = (1-\hat{a})\cdot\sum_{i\geq 2}{\left(iu_i\cdot\hat{a}^{i-1}\right)} \\
& =\Theta(1)\cdot\left(\sum_{i=2}^{\log{n'}}{\left(iu_i\cdot\hat{a}^{i-1}\right)}+\sum_{i>\log{n'}}{\left(iu_i\cdot\hat{a}^{i-1}\right)}\right) \\
& =\Theta(1)\cdot\left(\Theta\left(\frac{n'}{\log^8{n'}}\right)+\sum_{i>\log{n'}}{\left(iu_i\cdot\hat{a}^{i-1}\right)}\right)
\end{align*}

Notice that:

\vspace{-2ex}
\begin{align*}
\sum_{i>\log{n'}}{\left(iu_i\cdot\hat{a}^{i-1}\right)} & \leq\sum_{i>\log{n'}}{\left(iu_i\cdot \left(e^{-6}\right)^{(i-1)}\right)} \leq \sum_{i>\log{n'}}{\left(iu_i\cdot e^{-6\log{n'}}\right)} \\
& =e^{-6\log{n'}}\cdot\sum_{i>\log{n'}}{iu_i} \leq e^{-6\log{n'}}\cdot n'\cdot{n'\choose 2} \\
& =O\left(\left(n'\right)^{-3}\right)
\end{align*}

As a result, we know $\mathbb{E}(X)=\Theta(n'/\log^8{n'})$.

To show $X$ is not likely to deviate much from its expectation, we will use the Chebyshev's inequality~\cite{mitzenmacher17}, which in turn requires us to calculate the variance of $X$.

By the definition of variance, we know:

\vspace{-2ex}
\begin{align*}
\mathrm{Var}(X) =\mathrm{Var}\left(\sum_{e\in\mathcal{E'}}{X_e}\right) = \quad \sum_{e\in\mathcal{E'}}{\mathrm{Var}(X_e)} \quad + \quad \sum_{e,e'\in\mathcal{E'};e\cap e'\neq\emptyset}{\mathrm{Cov}(X_e,X_{e'})}
\end{align*}

Since $\mathrm{Var}(X_e)=\mathbb{E}(X_e^2)-(\mathbb{E}(X_e))^2\leq\mathbb{E}(X_e^2)=\mathbb{E}(X_e)$, we know $\sum_{e\in\mathcal{E'}}{\mathrm{Var}(X_e)}\leq\mathbb{E}(X)$.

On the other hand:

\vspace{-2ex}
\begin{align*}
\sum_{e,e'\in\mathcal{E'};e\cap e'\neq\emptyset}{\mathrm{Cov}(X_{e},X_{e'})} \quad = & \quad \sum_{e,e'\in\mathcal{E'};e\cap e'\neq\emptyset}{\left(\mathbb{E}(X_{e} X_{e'})-\mathbb{E}(X_{e}) \mathbb{E}(X_{e'})\right)} \\
\leq & \quad \sum_{e,e'\in\mathcal{E'};e\cap e'\neq\emptyset}{\mathbb{E}(X_{e}X_{e'})}
\end{align*}

Since $\mathcal{H'}$ is a linear hypergraph, we can conclude:

\vspace{-2ex}
\begin{align*}
\phantom{=} & \quad \sum_{e,e'\in\mathcal{E'};e\cap e'\neq\emptyset}{\mathbb{E}(X_{e}X_{e'})}\\
= & \quad \sum_{e,e'\in\mathcal{E'};e\cap e'\neq\emptyset}{\left( (|e|-1)(|e'|-1)(1-\hat{a})^2\cdot\hat{a}^{|e|+|e'|-3}\right)} +\\
\phantom{+} & \quad \sum_{e,e'\in\mathcal{E'};e\cap e'\neq\emptyset}{\left((1-\hat{a})\cdot\hat{a}^{|e|+|e'|-2}\right)}\\
\leq & \quad 2\cdot\sum_{e,e'\in\mathcal{E'};e\cap e'\neq\emptyset}{(|e|\cdot|e'|\cdot\hat{a}^{|e|+|e'|-3})}\\
= & \quad 2\cdot\sum_{e\in\mathcal{E'}}{\left( |e|\cdot\hat{a}^{|e|-1}\cdot\sum_{e'\in\mathcal{E'};e\cap e'\neq\emptyset}{(|e'|\cdot\hat{a}^{|e'|-2})} \right)}\\
\end{align*}

Notice, in the above, the first equality holds since $X_e X_{e'}=1$ iff $X_e=X_{e'}=1$, which can only happen in one of the two following cases: (a) $e\cap e'=\{u\}$, all nodes in $(e\cup e')-\{u\}$ is marked and $u$ is not marked; or (b) $e\cap e'=\{u\}$, one node in $e-\{u\}$ is not marked, one node in $e'-\{u\}$ is not marked, and all other nodes in $e\cup e'$ are marked.

Our next step is to obtain an upper bound for $\sum_{e,e'\in\mathcal{E'}; e\cap e'\neq\emptyset}{\mathrm{Cov}(X_{e},X_{e'})}$, by bounding $\sum_{e\in\mathcal{E'}}{(|e|\cdot\hat{a}^{|e|-1}\cdot\sum_{e'\in\mathcal{E'};e\cap e'\neq\emptyset}{(|e'|\cdot\hat{a}^{|e'|-2})})}$. Define $\Delta_i=\max_{v\in\mathcal{V'}}{d_i(v)}$. Since $\mathcal{H'}$ is equitable, we know $\Delta_i\leq(iu_i/n')\cdot\log^5{n'}$ for $2\leq i\leq\log{n'}$. The analysis in Figure \ref{fig-eqnarray-1} shows $\sum_{e,e'\in\mathcal{E'}; e\cap e'\neq\emptyset}{\mathrm{Cov}(X_{e},X_{e'})}\in O((\mathbb{E}(X))^2/(\log{n'}))$, and some explanations are needed:

\begin{itemize}
	\item To see inequality (\ref{eqn-mis-case2-leq-2}), notice $\sum_{e\in\mathcal{E'}}\sum_{e'\in\mathcal{E'};e\cap e'\neq\emptyset}{(|e'|\cdot\hat{a}^{|e'|-2})}$ is equal to the sum of $\sum_{e\in\mathcal{E'};|e|\leq\log{n'}}\sum_{e'\in\mathcal{E'};e\cap e'\neq\emptyset}{(|e'|\cdot\hat{a}^{|e'|-2})}$ and $\sum_{e\in\mathcal{E'};|e|>\log{n'}}\sum_{e'\in\mathcal{E'};e\cap e'\neq\emptyset}{(|e'|\cdot\hat{a}^{|e'|-2})}$. Let us focus on $\sum_{e\in\mathcal{E'};|e|\leq\log{n'}}\sum_{e'\in\mathcal{E'};e\cap e'\neq\emptyset}{(|e'|\cdot\hat{a}^{|e'|-2})}$. Fix a hyperedge $e\in\mathcal{E'}$ such that $|e|\leq\log{n'}$, we now bound $\sum_{e'\in\mathcal{E'};e\cap e'\neq\emptyset}{(|e'|\cdot\hat{a}^{|e'|-2})}$. For each node $v\in e$, for arbitrary $i$, by the definition of $\Delta_i$, we know $d_i(v)\leq\Delta_i$. That is, for each node $v\in e$, for arbitrary $i$, node $v$ is contained within at most $\Delta_i$ dimension $i$ hyperedges; or, put another way, there are at most $\Delta_i-1$ dimension $i$ hyperedges in $\mathcal{E'}$ that intersect with $e$ on node $v$. Since $|e|\leq\log{n'}$, we know $\sum_{e'\in\mathcal{E'};e\cap e'\neq\emptyset}{(|e'|\cdot\hat{a}^{|e'|-2})}\leq\log{n'}\cdot\sum_{i\geq2}{(i\Delta_i \hat{a}^{i-2})}$. Similarly, when $|e|\geq\log{n'}$, we know $\sum_{e'\in\mathcal{E'};e\cap e'\neq\emptyset}{(|e'|\cdot\hat{a}^{|e'|-2})}\leq n'\cdot\sum_{i\geq2}{(i\Delta_i \hat{a}^{i-2})}$, as the maximum size for any hyperedge is bounded by $n'$.
	
	\item To see inequality (\ref{eqn-mis-case2-leq-4}), notice $\sum_{i>\log{n'}}{(i\Delta_i\hat{a}^{i-2})}\leq n'\cdot(n'\cdot{n'\choose 2}\cdot(e^{-6})^{\log{n'}-1})=O(1/(n')^2)$. In the meantime, $\sum_{i=2}^{\log{n'}}{(i\Delta_i\hat{a}^{i-2})}\leq\sum_{i=2}^{\log{n'}}{((i^2u_i/n')\cdot\log^5{n'}\cdot\hat{a}^{i-2})}\leq(\log^6{n'}/(\hat{a}n'))\cdot\sum_{i=2}^{\log{n'}}{(i\cdot u_i\cdot\hat{a}^{i-1})}\leq (1/\log^2{n'})\cdot(2n'/\log^8{n'})=O(n'/\log^{10}{n'})$.
	
	\item To see inequality (\ref{eqn-mis-case2-leq-5}), notice $\sum_{|e|>\log{n'}}{(|e|\cdot\hat{a}^{|e|-1}\cdot(n'(O({n'}/{\log^{10}{n'}})+O({1}/{(n')^2}))))}=\sum_{|e|>\log{n'}}{(|e|\cdot\hat{a}^{|e|-1}\cdot(n'\cdot O({n'}/{\log^{10}{n'}})))}=O((n')^2/\log^{10}{n'})\cdot\sum_{|e|>\log{n'}}{(|e|\cdot\hat{a}^{|e|-1})}$. Moreover, $\sum_{|e|>\log{n'}}{(|e|\cdot\hat{a}^{|e|-1})}\leq{n'\choose 2}\cdot(n'\cdot(e^{-6})^{\log{n'}})\leq 1/(n')^3$. As a result, we know $\sum_{e\in\mathcal{E'};|e|>\log{n'}}{(|e|\cdot\hat{a}^{|e|-1}\cdot(n'(O({n'}/{\log^{10}{n'}})+O({1}/{(n')^2}))))}\leq O(1/(n'\cdot\log^{10}{n'}))$.
	
	\item To see inequality (\ref{eqn-mis-case2-leq-6}), notice that $\sum_{|e|\leq\log{n'}}{(|e|\cdot\hat{a}^{|e|-1}\cdot O({1}/{(n')^2})))}=O(\log{n'}/(n')^2)\cdot\sum_{|e|\leq\log{n'}}{(|e|\cdot\hat{a}^{|e|-1})}$. In the meantime, $\sum_{|e|\leq\log{n'}}{(|e|\cdot\hat{a}^{|e|-1})}=\sum_{i=2}^{\log{n'}}{(i\cdot u_i\cdot\hat{a}^{i-1})} \leq 2n'/\log^8{n'}$. Therefore, $\sum_{e\in\mathcal{E'};|e|\leq\log{n'}}{(|e|\cdot\hat{a}^{|e|-1}\cdot O({1}/{(n')^2})))}\leq O(1/(n'\cdot\log^{7}{n'}))$.
\end{itemize}

\begin{figure}[t!]
\begin{small}
\begin{align*}
\phantom{\leq} & \quad \sum_{e,e'\in\mathcal{E'}; e\cap e'\neq\emptyset}{\mathrm{Cov}(X_{e},X_{e'})}\\
\leq & \quad 2\cdot\sum_{e\in\mathcal{E'}}{\left( |e|\cdot\hat{a}^{|e|-1}\cdot\sum_{e'\in\mathcal{E'};e\cap e'\neq\emptyset}{(|e'|\cdot\hat{a}^{|e'|-2})} \right)}\\
\leq & \quad 2\cdot\sum_{e\in\mathcal{E'};|e|\leq\log{n'}}{\left(|e|\cdot\hat{a}^{|e|-1}\cdot\left(\log{n'}\cdot\sum_{i\geq2}{(i\Delta_i \hat{a}^{i-2})}\right)\right)} + \numberthis\label{eqn-mis-case2-leq-2} \\
\phantom{+} & \quad 2\cdot\sum_{e\in\mathcal{E'};|e|>\log{n'}}{\left(|e|\cdot\hat{a}^{|e|-1}\cdot\left(n'\cdot\sum_{i\geq 2}{(i\Delta_i \hat{a}^{i-2})}\right)\right)}\\
= & \quad 2\cdot\sum_{e\in\mathcal{E'};|e|\leq\log{n'}}{\left(|e|\cdot\hat{a}^{|e|-1}\cdot\left(\log{n'}\left(\sum_{i=2}^{\log{n'}}{(i\Delta_i \hat{a}^{i-2})}+\sum_{i>\log{n'}}{(i\Delta_i \hat{a}^{i-2})}\right)\right)\right)} +\\
\phantom{+} & \quad 2\cdot\sum_{e\in\mathcal{E'};|e|>\log{n'}}{\left(|e|\cdot\hat{a}^{|e|-1}\cdot\left(n'\left(\sum_{i=2}^{\log{n'}}{(i\Delta_i \hat{a}^{i-2})}+\sum_{i>\log{n'}}{(i\Delta_i \hat{a}^{i-2})}\right)\right)\right)}\\
\leq & \quad 2\cdot\sum_{e\in\mathcal{E'};|e|\leq\log{n'}}{\left(|e|\cdot\hat{a}^{|e|-1}\cdot\left(\log{n'}\left(\sum_{i=2}^{\log{n'}}{(i\Delta_i \hat{a}^{i-2})}+O\left(\frac{1}{(n')^2}\right)\right)\right)\right)} + \numberthis\label{eqn-mis-case2-leq-4}\\
\phantom{+} & \quad 2\cdot\sum_{e\in\mathcal{E'};|e|>\log{n'}}{\left(|e|\cdot\hat{a}^{|e|-1}\cdot\left(n'\left(O\left(\frac{n'}{\log^{10}{n'}}\right)+O\left(\frac{1}{(n')^2}\right)\right)\right)\right)}\\
\leq & \quad 2\cdot\sum_{e\in\mathcal{E'};|e|\leq\log{n'}}{\left(|e|\cdot\hat{a}^{|e|-1}\cdot\left(\log{n'}\left(\sum_{i=2}^{\log{n'}}{(i\Delta_i \hat{a}^{i-2})}+O\left(\frac{1}{(n')^2}\right)\right)\right)\right)} + \numberthis\label{eqn-mis-case2-leq-5}\\
\phantom{+} & \quad O\left(\frac{1}{n'\cdot\log^{10}{n'}}\right)\\
\leq & \quad 2\cdot\sum_{e\in\mathcal{E'};|e|\leq\log{n'}}{\left(|e|\cdot\hat{a}^{|e|-1}\cdot\log{n'}\cdot\sum_{i=2}^{\log{n'}}{(i\Delta_i \hat{a}^{i-2})}\right)} + O\left(\frac{1}{n'\log^7{n'}}\right) + \numberthis\label{eqn-mis-case2-leq-6}\\
\phantom{+} & \quad O\left(\frac{1}{n'\cdot\log^{10}{n'}}\right)\\
\leq & \quad \frac{2\log^7{n'}}{\hat{a}\cdot n'}\cdot\sum_{e\in\mathcal{E'};|e|\leq\log{n'}}{\left( |e|\cdot\hat{a}^{|e|-1}\cdot\left(\sum_{i=2}^{\log{n'}}{i\cdot u_i\cdot\hat{a}^{i-1}}\right) \right)}+O(1)\\
\leq & \quad \frac{2\log^7{n'}}{\hat{a}\cdot n'\cdot(1-\hat{a})}\cdot\mathbb{E}(X)\cdot\left(\sum_{i=2}^{\log{n'}}{i\cdot u_i\cdot\hat{a}^{i-1}}\right)+O(1)\\
\leq & \quad \frac{2\log^7{n'}}{\hat{a}\cdot n'\cdot(1-\hat{a})}\cdot\mathbb{E}(X)\cdot\frac{2n'}{\log^8{n'}}+O(1)\\
\leq & \quad O\left(\frac{1}{\log{n'}}\right)\cdot\mathbb{E}(X)\cdot\mathbb{E}(X)+O(1)\\
= & \quad O\left(\frac{(\mathbb{E}(X))^2}{\log{n'}}\right)\\
\end{align*}
\end{small}
\vspace{-6ex}
\caption{\textbf{Bounding $\sum_{e,e'\in\mathcal{E'}; e\cap e'\neq\emptyset}{\mathrm{Cov}(X_{e},X_{e'})}$ to $O((\mathbb{E}(X))^2/(\log{n'}))$.}}\label{fig-eqnarray-1}
\end{figure}

At this point, we can conclude $\mathrm{Var}(X)\leq\mathbb{E}(X)+O((\mathbb{E}(X))^2/\log{n'})=O((\mathbb{E}(X))^2/\log{n'})$. Apply the Chebyshev's inequality, and our claim follows.
\end{proof}

We then prove our second claim, which states there are only few pairs of intersecting hyperedges in $\mathcal{H'}$ which share a large number of nodes with $\mathcal{W}$. More precisely:

\begin{claim}
With at least constant probability, equitable hypergraph $\mathcal{H'}$ contains at most $O(n'/\log^9{n'})$ pairs of hyperedges $e,e'$ for which $e\cap e'\neq\emptyset$, $|e\cap\mathcal{W}|\geq|e|-1$, and $e'\backslash e\subseteq\mathcal{W}$.
\end{claim}

\begin{proof}
Let $Y$ denote the number of such pairs of hyperedges, since $\mathcal{H'}$ is a linear hypergraph, by the analysis shown in Figure \ref{fig-eqnarray-2}, we know $\mathbb{E}(Y)$ is at most $O(n'/(\log^{10}{n'}))$. As a result, the claim follows by Markov's inequality.
\end{proof}

\begin{figure}[t!]
\begin{align*}
\mathbb{E}(Y) \quad \leq & \quad \sum_{e\in\mathcal{E'}}{\left( |e|\cdot\hat{a}^{|e|-1}\cdot\sum_{e'\in\mathcal{E'};e\cap e'\neq\emptyset}{\hat{a}^{|e'|-1}} \right)}\\
= & \quad \sum_{i=2}^{\log{n'}}{\left( i\cdot u_i\cdot\hat{a}^{i-1}\cdot\sum_{e'\in\mathcal{E'};e\cap e'\neq\emptyset}{\hat{a}^{|e'|-1}} \right)} + \sum_{i>\log{n'}}{\left( i\cdot u_i\cdot\hat{a}^{i-1}\cdot\sum_{e'\in\mathcal{E'};e\cap e'\neq\emptyset}{\hat{a}^{|e'|-1}} \right)}\\
\leq & \quad \sum_{i=2}^{\log{n'}}{\left( i\cdot u_i\cdot\hat{a}^{i-1}\cdot\log{n'}\cdot\sum_{j\geq 2}{(\Delta_j\cdot\hat{a}^{j-1})} \right)} + \sum_{i>\log{n'}}{\left( i\cdot u_i\cdot\hat{a}^{i-1}\cdot n'\cdot\sum_{j\geq 2}{(\Delta_j\cdot\hat{a}^{j-1})} \right)}\\
\leq & \quad \sum_{i=2}^{\log{n'}}{\left( i\cdot u_i\cdot\hat{a}^{i-1}\cdot\log{n'}\cdot\left(\sum_{j=2}^{\log{n'}}{(\Delta_j\cdot\hat{a}^{j-1})}+\sum_{j>\log{n'}}{(\Delta_j\cdot\hat{a}^{j-1})}\right) \right)} + \\
\phantom{=} & \quad \sum_{i>\log{n'}}{\left( i\cdot u_i\cdot\hat{a}^{i-1}\cdot n'\cdot\left(\sum_{j=2}^{\log{n'}}{(\Delta_j\cdot\hat{a}^{j-1})}+\sum_{j>\log{n'}}{(\Delta_j\cdot\hat{a}^{j-1})}\right) \right)}\\
\leq & \quad \sum_{i=2}^{\log{n'}}{\left( i\cdot u_i\cdot\hat{a}^{i-1}\cdot\log{n'}\cdot\left(O\left(\frac{1}{\log^3{n'}}\right)+O\left(\frac{1}{(n')^3}\right)\right) \right)} + \\
\phantom{=} & \quad \sum_{i>\log{n'}}{\left( i\cdot u_i\cdot\hat{a}^{i-1}\cdot n'\cdot\left(O\left(\frac{1}{\log^3{n'}}\right)+O\left(\frac{1}{(n')^3}\right)\right) \right)}\\
\leq & \quad \sum_{i=2}^{\log{n'}}{\left( i\cdot u_i\cdot\hat{a}^{i-1}\cdot\log{n'}\cdot O\left(\frac{1}{\log^3{n'}}\right) \right)} + \sum_{i>\log{n'}}{\left( i\cdot u_i\cdot\hat{a}^{i-1}\cdot n'\cdot O\left(\frac{1}{\log^3{n'}}\right) \right)}\\
\leq & \quad O\left(\frac{n'}{\log^{10}{n'}}\right) + O\left(\frac{1}{(n')^2\cdot\log^3{n'}}\right) = O\left(\frac{n'}{\log^{10}{n'}}\right)\\
\end{align*}
\vspace{-6ex}
\caption{\textbf{Bounding $\mathbb{E}(Y)$ to $O(n'/(\log^{10}{n'}))$.}}\label{fig-eqnarray-2}
\end{figure}

We now prove the lemma. The above two claims show that in each iteration, with at least constant probability, in $\mathcal{H'}$ there exists a set $\tilde{\mathcal{E}}'$ of hyperedges of cardinality $\Theta(n'/\log^8{n'})$ such that: (a) for $e\in\tilde{\mathcal{E}}'$ we have $e\backslash\mathcal{W}=\{v_e\}$; (b) no $e$ from $\tilde{\mathcal{E}}'$ share a node with a hyperedge of $\mathcal{H'}$ entirely contained in $\mathcal{W}$; and (c) for $e, e'\in\tilde{\mathcal{E}}'$, $v_{e}\neq v_{e'}$ whenever $e\neq e'$.

Now, notice that (a) and (b) imply that after part three of the iteration, for each hyperedge in $\tilde{\mathcal{E}}'$, at least one node has decided to not be in the MIS. Moreover, condition (c) guarantees that these nodes are different. Therefore, we have proved the lemma.
\end{proof}

The last case focuses on the scenario where $\hat{a}\geq e^{-6}$. In such situation, $\Theta(n')$ undecided nodes will be marked (i.e., $elected$), yet entirely marked hyperedges will contain at most $O(n'/\log^8{n'})$ nodes. As a result, we know $\Theta(n')$ nodes will decide to join the MIS.

\begin{lemma}[Adopted from Case 3 of Lemma 1 in \cite{luczak97}]\label{lemma-linear-mis-remove-nodes-case3}
Assume after part one of an iteration there are $n'$ nodes in the generated equitable subhypergraph. Further assume $\hat{a}\geq e^{-6}$. Then, after part three of this iteration, with at least constant probability, at least $\Theta(n')$ previously undecided nodes will decide whether to join the MIS or not.
\end{lemma}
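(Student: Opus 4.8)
The plan is as follows. Since we are in the case $\hat{a}\geq e^{-6}$, we have $p_0=\min\{\hat{a},e^{-6}\}=e^{-6}$, and because $n'\geq c_1$ is a sufficiently large constant we also have $p_0=e^{-6}>\log^8{n'}/n'$. Hence the algorithm takes the branch in which every active node of the equitable subhypergraph $\mathcal{H}'$ is independently placed into the candidate set $\mathcal{W}$ with probability $p_0=e^{-6}$. Recall that a node becomes $included$ precisely when it lies in $\mathcal{W}$ but is contained in no hyperedge of $\mathcal{H}'$ that is entirely elected; call this set $\mathcal{I}'$. My goal is to show that with constant probability $|\mathcal{I}'|=\Theta(n')$, since every node of $\mathcal{I}'$ decides to join the MIS during part three. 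I would lower bound $|\mathcal{I}'|$ by $|\mathcal{W}|-Z$, where $Z:=\sum_{e\in\mathcal{E}',\,e\subseteq\mathcal{W}}|e|$ upper bounds the number of elected nodes that are removed from $\mathcal{W}$ because they complete a fully-elected hyperedge.

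First I would lower bound $|\mathcal{W}|$. As $|\mathcal{W}|$ is a sum of $n'$ independent indicators each of mean $e^{-6}$, we have $\mathbb{E}(|\mathcal{W}|)=e^{-6}n'$, and a Chernoff bound gives $|\mathcal{W}|\geq\tfrac12 e^{-6}n'=\Theta(n')$ with probability $1-e^{-\Omega(n')}$, in particular with high probability for $n'\geq c_1$.

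The main work is to upper bound $\mathbb{E}(Z)$. Since each hyperedge $e$ is entirely elected with probability $p_0^{|e|}$, we have $\mathbb{E}(Z)=\sum_{i\geq 2}i u_i p_0^{i}$, and I would split this sum at $i=\log{n'}$. For the low part, using $p_0=e^{-6}\leq\hat{a}$ (so $p_0^{i-1}\leq\hat{a}^{i-1}$) together with the defining inequality $\sum_{i=2}^{\log{n'}}i u_i\hat{a}^{i-1}\leq 2n'/\log^8{n'}$, I obtain $\sum_{i=2}^{\log{n'}}i u_i p_0^{i}\leq e^{-6}\sum_{i=2}^{\log{n'}}i u_i\hat{a}^{i-1}=O(n'/\log^8{n'})$. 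For the high part, I would use $p_0^{i}\leq e^{-6\log{n'}}=(n')^{-6}$ for $i>\log{n'}$ together with Fact~\ref{fact-linear-hypergraph-edge-num}, which yields $\sum_i i u_i\leq n'\binom{n'}{2}=O((n')^3)$, so $\sum_{i>\log{n'}}i u_i p_0^{i}=O((n')^{-3})$. Combining the two parts gives $\mathbb{E}(Z)=O(n'/\log^8{n'})$, and Markov's inequality then yields $Z=O(n'/\log^8{n'})$ with probability at least $1/2$.

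Finally I would take a union bound over the two events above, which hold simultaneously with constant probability; on this event $|\mathcal{I}'|\geq|\mathcal{W}|-Z=\Theta(n')-O(n'/\log^8{n'})=\Theta(n')$. Since every node of $\mathcal{I}'$ becomes $included$ during part three, at least $\Theta(n')$ previously undecided nodes decide, proving the lemma. I expect the only delicate point to be the tail estimate for hyperedges of size exceeding $\log{n'}$: the defining inequality for $\hat{a}$ only ranges over $2\leq i\leq\log{n'}$, so the potentially many large-$i$ terms $u_i$ must instead be controlled via the linear-hypergraph edge bound of Fact~\ref{fact-linear-hypergraph-edge-num} combined with the fact that $p_0^{i}$ is superpolynomially small once $i>\log{n'}$.
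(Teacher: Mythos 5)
Your proposal is correct and follows essentially the same route as the paper: a Chernoff bound to get $|\mathcal{W}|=\Theta(n')$, an expectation bound of $\sum_{i\geq 2} i u_i p_0^i = O(n'/\log^8 n')$ on the nodes lost to fully-elected hyperedges (split at $i=\log n'$, using $p_0\leq\hat{a}$ for the low range and Fact~\ref{fact-linear-hypergraph-edge-num} with the superpolynomially small $p_0^i$ for the tail), and Markov's inequality to finish. The "delicate point" you flag about the tail is handled in the paper in exactly the way you describe.
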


\begin{proof}
Since $\hat{a}\geq e^{-6}$, we know each node in the generated equitable hypergraph $\mathcal{H'}$ will be selected with probability $e^{-6}$. By a Chernoff bound~\cite{mitzenmacher17}, we know w.h.p.\ w.r.t.\ $n'$, $\Theta(n')$ nodes will be selected into $\mathcal{W}$. I.e., $|\mathcal{W}|=\Theta(n')$ with at least constant probability.

On the other hand, in expectation, the number of nodes that belong to some hyperedges that are entirely contained  in $\mathcal{W}$ is upper bounded by $\sum_{i\geq 2}{i\cdot u_i\cdot p_0^i}=\sum_{i=2}^{\log{n'}}{i\cdot u_i\cdot p_0^i}+\sum_{i>\log{n'}}{i\cdot u_i\cdot p_0^i}\leq e^{-6}\cdot\sum_{i=2}^{\log{n'}}{i\cdot u_i\cdot \hat{a}^{i-1}}+O(1/(n')^2)=O(n'/\log^8{n'})$.

Therefore, by a Markov's inequality, we know with at least some constant probability, after part two of the iteration, we can find an independent set of size $\Theta(n')$. Moreover, these nodes will decide to join the MIS by the end of this iteration.
\end{proof}

Combine the above four lemmas, we can conclude if prior to an iteration there are $n'$ undecided nodes, then after this iteration, with at least constant probability, at least $\Omega(n'/\log^{17}{n'})$ nodes will decide, provided $n'$ is sufficiently large. Since $n'\leq n$, this means after $O(\log^{18}{n})$ iterations, the number of undecided nodes will be reduced to some sufficiently large constant $c_1$, w.h.p.

Once the number of undecided nodes is reduced to $c_1$, during part two of an iteration, one of the two following situations will happen: (a) $p_0\leq\log^8{n'}/n'$, in which case only one node is selected into $\mathcal{W}$; or (b) $e^{-6}\geq p_0>\log^8{n'}/n'$, in which case each of the $n'$ nodes is selected with probability $p_0$. In the first case, the single selected node will decide to join the MIS after this iteration. In the second case, since $n'\leq c_1$ is a constant, we know with at least some constant probability only one of the $n'$ nodes will be selected into $\mathcal{W}$, and will decide to join the MIS after this iteration. Either way, we know after each iteration, with at least some constant probability, one node will decide to join the MIS.

At this point, we can conclude that after at most some poly-logarithmic (w.r.t.\ $n$) iterations, all nodes in $\mathcal{H}$ will decide, w.h.p. Moreover, it is easy to see that the result is indeed an MIS of $\mathcal{H}$. Combine these with Lemma \ref{lemma-decomp}, we immediately have the following theorem.

\begin{theorem}\label{thm-linear-mis}
In the CONGEST model, there exists a distributed algorithm that can solve the MIS problem for linear hypergraphs within poly-logarithmic time, w.h.p.
\end{theorem}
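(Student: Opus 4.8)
The plan is to assemble the three case lemmas together with the equitable-subhypergraph lemma into a single per-iteration progress statement, lift that statement to a high-probability bound over all iterations of the core algorithm on low-diameter instances, and finally pass to general linear hypergraphs via the decomposition lemma.

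First I would combine Lemma~\ref{lemma-linear-mis-large-eq} with Lemmas~\ref{lemma-linear-mis-remove-nodes-case1}, \ref{lemma-linear-mis-remove-nodes-case2}, and~\ref{lemma-linear-mis-remove-nodes-case3}. Lemma~\ref{lemma-linear-mis-large-eq} guarantees that part one of each iteration produces an equitable subhypergraph retaining at least half of the currently undecided nodes, so if there are $\hat{n}$ undecided nodes at the start of the iteration the equitable subhypergraph has $n'=\Omega(\hat{n})$ nodes. The value of $\hat{a}$ then falls into exactly one of the three ranges covered by the case lemmas, and in every case at least $\Omega(n'/\log^{17}{n'})$ previously undecided nodes decide with at least constant probability. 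This yields the unified claim: provided $n'$ is sufficiently large, a single iteration reduces the undecided count from $n'$ to $n'\cdot(1-\Omega(1/\log^{17}{n}))$ with at least constant probability.

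Next I would amplify this per-iteration constant-probability progress into a high-probability guarantee over the $\Theta(\log^{18}{n})$ iterations of the main loop. Since each successful iteration shrinks the undecided set by a factor of $1-\Omega(1/\log^{17}{n})$, roughly $\Theta(\log^{18}{n})$ successful iterations suffice to drive the count from $n$ down to a sufficiently large constant $c_1$. As each iteration succeeds with at least constant probability conditioned on the outcomes of all earlier iterations, the number of successful iterations stochastically dominates a binomial random variable, so a Chernoff bound shows that a large enough constant multiple of $\log^{18}{n}$ iterations contains the required number of successes w.h.p. Once the undecided count reaches $c_1$, I would argue separately (as sketched in the paragraphs preceding the theorem) that each further iteration decides at least one node with at least constant probability---either via the single $elected$ node in the small-$\hat{a}$ branch, or because on a constant-size instance only one node is $elected$ with constant probability---so poly-logarithmically many additional iterations finish off the remaining nodes w.h.p.

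For correctness I would invoke Fact~\ref{fact-subhypergraph-mis} to ensure that the set assembled across the equitable subhypergraphs is genuinely independent in $\mathcal{H}$, and observe that part three guarantees every node eventually enters the $included$ or $excluded$ state, so the output is indeed a \emph{maximal} independent set. Finally, since the above establishes that the core algorithm solves MIS on an $O(\log{n})$-diameter linear hypergraph in some poly-logarithmic time $T(n)$, Lemma~\ref{lemma-decomp} upgrades this to an $O(T(n)\cdot\log{n}+\log^2{n})$ algorithm on arbitrary linear hypergraphs, which remains poly-logarithmic and completes the proof. The main obstacle is the amplification in the second step: because progress holds only with constant probability per iteration rather than deterministically, the concentration argument must be set up so that the rare failing iterations do not accumulate and inflate the total iteration count beyond the poly-logarithmic budget.
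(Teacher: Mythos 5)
Your proposal is correct and follows essentially the same route as the paper: combine Lemma~\ref{lemma-linear-mis-large-eq} with the three case lemmas to get constant-probability progress of $\Omega(n'/\log^{17}{n'})$ per iteration, amplify over $\Theta(\log^{18}{n})$ iterations to reach a constant number of undecided nodes w.h.p., finish the constant-size remainder one node at a time, and lift to general linear hypergraphs via Lemma~\ref{lemma-decomp}. Your explicit Chernoff/stochastic-domination justification of the amplification step is slightly more detailed than the paper's, but it is the same argument.
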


\section{Computing a GMIS in Constant Dimension Linear Hypergraphs}

\subsection{The Algorithm}

To compute a generalized maximal independent set (GMIS) for a linear hypergraph, we take a similar approach as in the MIS case: first decompose the input hypergraph; then run the core GMIS algorithm within the generated subhypergraphs in parallel; finally, combine these partial solutions to obtain a complete GMIS for the original input hypergraph.

We once again restrict our attention to hypergraphs with $O(\log{n})$ diameter (see Lemma \ref{lemma-decomp}). We also restrict the dimension of the input hypergraph to be some constant $d$. Towards the end of the paper, we will discuss why this limitation is posed.

The core algorithm for computing GMIS in low-diameter hypergraphs is a non-trivial generalization of our previous hypergraph MIS algorithm. Before presenting more details, we introduce some updated notations.

The first one is \emph{strict subhypergraph}. Consider a hypergraph $\mathcal{H}=(\mathcal{V},\mathcal{E})$, for a subset $\mathcal{W}$ of $\mathcal{V}$, the induced strict subhypergraph $\mathcal{H}_{\mathcal{W}}=(\mathcal{W},\mathcal{E}_{\mathcal{W}})$ is defined as: for each $v\in\mathcal{V}\backslash\mathcal{W}$, delete $v$ from each hyperedge $e\in\mathcal{E}$, the threshold attached with $e$ remains unchanged; then, for each remaining hyperedge $e'=e\cap\mathcal{W}$, delete $e'$ if $|e'|\leq t_{e'}$. (Notice $t_{e'}=t_e$.)

The reason for defining strict subhypergraph is to maintain a property that is critical to the correctness of our algorithm. More specifically, consider a hypergraph $\mathcal{H}$ and one of its subhypergraph $\mathcal{H}'$. When dealing with the MIS problem, an independent set of $\mathcal{H}'$ is also independent in $\mathcal{H}$. (See Fact \ref{fact-subhypergraph-mis}.) However, for generalized independent sets, this is no longer the case. By contrast, if $\mathcal{H}''$ is a strict subhypergraph of $\mathcal{H}$, then a generalized independent set of $\mathcal{H}''$ is also a generalized independent set of $\mathcal{H}$. That is, we have:

\begin{fact}\label{fact-strict-subhypergraph-gmis}
If $\mathcal{H}_{\mathcal{W}}$ is a strict subhypergraph of $\mathcal{H}$, then a generalized independent set of $\mathcal{H}_{\mathcal{W}}$ is also a generalized independent set of $\mathcal{H}$.
\end{fact}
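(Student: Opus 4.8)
The plan is to fix an arbitrary generalized independent set $\mathcal{I}$ of $\mathcal{H}_{\mathcal{W}}$ (so $\mathcal{I}\subseteq\mathcal{W}$, and $|e'\cap\mathcal{I}|\leq t_{e'}$ for every hyperedge $e'\in\mathcal{E}_{\mathcal{W}}$) and to verify the defining inequality $|e\cap\mathcal{I}|\leq t_e$ for every hyperedge $e\in\mathcal{E}$ of the original hypergraph. The crucial observation that makes the argument go through is that, since $\mathcal{I}$ lives entirely inside $\mathcal{W}$, intersecting with $e$ is the same as intersecting with its truncation $e'=e\cap\mathcal{W}$; that is, $e\cap\mathcal{I}=e'\cap\mathcal{I}$. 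Hence it suffices to control $|e'\cap\mathcal{I}|$.

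First I would fix $e\in\mathcal{E}$, set $e'=e\cap\mathcal{W}$, and split into two cases according to whether $e'$ survives the truncation step in the definition of the strict subhypergraph. In the first case, $e'$ was deleted, which by definition happens exactly when $|e'|\leq t_{e'}=t_e$; then $|e\cap\mathcal{I}|=|e'\cap\mathcal{I}|\leq|e'|\leq t_e$, and the constraint holds trivially. In the second case, $e'$ is retained as a genuine hyperedge of $\mathcal{E}_{\mathcal{W}}$ carrying the unchanged threshold $t_{e'}=t_e$; since $\mathcal{I}$ is by assumption a generalized independent set of $\mathcal{H}_{\mathcal{W}}$, we directly get $|e'\cap\mathcal{I}|\leq t_{e'}=t_e$, so again $|e\cap\mathcal{I}|=|e'\cap\mathcal{I}|\leq t_e$. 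As $e$ was arbitrary, this establishes the fact.

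The point I expect to require the most care is not a hard computation but making sure the case analysis is exhaustive and faithfully reflects the definition: the only way a threshold constraint of $\mathcal{H}$ could fail while all constraints of $\mathcal{H}_{\mathcal{W}}$ hold would be through a hyperedge $e$ whose truncation $e'$ was dropped, and the strict subhypergraph is defined precisely so that such a drop can occur only when $|e'|\leq t_e$, which already caps $|e\cap\mathcal{I}|$. Contrasting this with the ordinary induced subhypergraph used for MIS clarifies why strictness is needed: there, a hyperedge can disappear merely because it is no longer fully contained in $\mathcal{W}$, with no guarantee that its surviving portion is small relative to $t_e$, so the analogue of the first case breaks down and a generalized independent set need not lift back to $\mathcal{H}$.
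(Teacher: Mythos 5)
Your proof is correct and is essentially the same argument as the paper's, just presented directly (an exhaustive case split on whether $e'=e\cap\mathcal{W}$ survives the truncation) rather than by contradiction; both hinge on the identity $e\cap\mathcal{I}=e'\cap\mathcal{I}$ for $\mathcal{I}\subseteq\mathcal{W}$ and on the deletion rule $|e'|\leq t_{e'}$ capping the intersection for dropped hyperedges. No gaps.
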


\begin{proof}
We prove the claim by contradiction. Assume $I_{\mathcal{W}}$ is a generalized independent set of $\mathcal{H}_{\mathcal{W}}=(\mathcal{W},\mathcal{E}_{\mathcal{W}})$, but not a generalized independent set of $\mathcal{H}=(\mathcal{V},\mathcal{E})$. Then there must exist a hyperedge $e\in\mathcal{E}$ such that $|e\cap I_{\mathcal{W}}|>t_e$. Notice, all the nodes in $I_{\mathcal{W}}$ are also in $\mathcal{W}$. Therefore, $t_e<|e\cap I_{\mathcal{W}}|\leq|e\cap\mathcal{W}|$. As a result, during the construction of $\mathcal{H}_{\mathcal{W}}$, a hyperedge $e'=e\cap\mathcal{W}$ is added to $\mathcal{H}_\mathcal{W}$, with threshold $t_{e'}=t_e$. However, recall that $|e'\cap I_{\mathcal{W}}|=|e\cap\mathcal{W}\cap I_{\mathcal{W}}|=|e\cap I_{\mathcal{W}}|>t_e=t_{e'}$, which contradicts the assumption that $I_{\mathcal{W}}$ is a generalized independent set of $\mathcal{H}_{\mathcal{W}}$. Thus our claim is proved.
\end{proof}

On the other hand, we have also significantly adjusted the definition of equitable hypergraph. In this section, for a hypergraph $\mathcal{H}=(\mathcal{V},\mathcal{E})$, define $U_t(\mathcal{H})=U_t$ to be the set of hyperedges with threshold $t$, and $u_t(\mathcal{H})=u_t$ to be the cardinality of $U_t$. Define $d_t(v,\mathcal{H})=d_t(v)$ to be the number of hyperedges that contain $v$ and have threshold $t$. Now, for a hypergraph $\mathcal{H}$, we say it is \emph{equitable} if it contains less than $c_{eq}$ nodes (where $c_{eq}$ is a sufficiently large constant), or for each node $v$ and each $i$ where $1\leq i\leq d-1$ we have $d_i(v)\leq(i u_i/n)\cdot\log^5{n}$.

We now describe the algorithm for computing GMIS in a low-diameter constant dimension linear hypergraph $\mathcal{H}$. The algorithm contains multiple iterations, each of which has three parts. In the first part, we try to find a large equitable strict subhypergraph $\mathcal{H'}$. In the second part, we add some nodes in $\mathcal{H}'$ into a candidate set $\mathcal{W}$. The detailed rule depends on a parameter $\hat{a}$ satisfying $n'/\log^8{n'}\leq\sum_{i=1}^{d-1}{u_i\cdot \hat{a}^{i}}\leq 2n'/\log^8{n'}$. (Notice, this definition of $\hat{a}$ is quite different from the one we used in our previous linear hypergraph MIS algorithm.) In case $\hat{a}$ is small, we add one node $v$ which maximizes $d_1(v)$ into $\mathcal{W}$; otherwise, we independently add each node into $\mathcal{W}$ with probability $\min\{\hat{a},e^{-6}\}$. Then, we remove from $\mathcal{W}$ all nodes that would violate some hyperedge's threshold constraint in $\mathcal{H}'$. The resulting set $\mathcal{I}'$ is a generalized independent set of $\mathcal{H}'$. In the last part, we add nodes from $\mathcal{I}'$ to $\mathcal{I}$, and remove them from $\mathcal{H}$. We also remove from $\mathcal{H}$ all nodes $v\notin\mathcal{W}$ for which there exists a hyperedge $e\in\mathcal{E}$ such that $|(\{v\}\cup\mathcal{I}')\cap e|>t_e$, as these nodes cannot be added into $\mathcal{I}$.

The detailed algorithm is shown in Figure \ref{fig-protocol-linear-gmis-distributed}. For simplicity, we again only include the pseudocode for server nodes. Moreover, to obtain the values of $n'$ and $u_i$ (and some other parameters), we reuse the aggregation procedures described in earlier sections.

\begin{figure}[t]
\hrule
\vspace{1ex}\textbf{Pseudocode executed at a node $v$ in $\mathcal{H}$:}\vspace{1ex}
\hrule
\begin{small}
\begin{algorithmic}[1]
\State $state\gets active$
\For {($l_1\gets 1$ to $\Theta(\log^{18}{n})$)}
	\If {($state\neq included$ and $state\neq excluded$)}
		\State $state\gets active$ \Comment If $v$ has not decided then join this iteration
	\EndIf
	\Statex \hspace{3ex}$\blacktriangleright$ \textsc{Part I}: Create equitable strict subhypergraph
	\State $\hat{n}\gets\texttt{CountNode}()$ \Comment Count nodes that are still in $active$ state in $O(\log{n})$ time
	\For {($l_2\gets 1$ to $\Theta(d\cdot\log{\hat{n}})$)}
		\State $n'\gets\texttt{CountNode}()$
		\State $\texttt{CountUi}()$ \Comment Count $u_i$ for $1\leq i\leq d-1$, takes $O(d\cdot\log{n})$ time
		\If {($\texttt{CheckEq()}=true$)} \Comment Checks if hypergraph is equitable in $O(\log{n})$ time
			\State \textbf{continue}
		\ElsIf {($state=active$ \textbf{and} $d_i(v)>\frac{iu_i}{n'}\cdot\log^4{n'}$ for some $1\leq i\leq d-1$)}
			\State $state\gets idle$
			\State Inform adjacent client nodes about $v$ becoming $idle$ for this iteration
		\EndIf
	\EndFor
	\If {($state=idle$)} \Comment Ignore part two if $v$ is not in the equitable strict subhypergraph
		\State \textbf{goto} \textsc{Part III}
	\EndIf
	\Statex \hspace{3ex}$\blacktriangleright$ \textsc{Part II}: Generate a generalized independent set
	\State Compute $\hat{a}$ such that $\frac{n'}{\log^8{n'}}\leq\sum_{i=1}^{d-1}{u_i\cdot \hat{a}^{i}}\leq\frac{2n'}{\log^8{n'}}$
	\State $p_0\gets\min\{\hat{a},e^{-6}\}$
	\If {($p_0\leq\frac{\log^8{n'}}{n'}$)}
		\State $v'\gets\texttt{MaxD1}()$ \Comment $\texttt{MaxD1}$ finds an active node $v'$ that maximizes $d_1(v')$ in $O(\log{n})$ time
		\If {($v=v'$)} $state\gets elected$ \EndIf
	\ElsIf {($\texttt{Random(0,1)}\leq p_0$)} $state\gets elected$
	\EndIf
	\If {(there is no adjacent client $e$ s.t.\ $elected$ servers connected to $e$ exceed $t_e$)}
		\State $state\gets included$ \Comment $v$ decides to join the GMIS
	\EndIf
	\Statex \hspace{3ex}$\blacktriangleright$ \textsc{Part III}: Update the hypergraph
	\If {($state=included$)}
		\State Inform adjacent client nodes about $v$ deciding to join the GMIS
	\EndIf
	\If {(there is an adjacent client $e$ s.t.\ $included$ servers connected to $e$ reaches $t_e$)}
		\State $state\gets excluded$ \Comment $v$ decides to not join the GMIS
	\EndIf
	\State Inform adjacent client nodes about $v$'s decision if it has decided
\EndFor
\end{algorithmic}
\end{small}
\hrule\vspace{1ex}
\caption{\textbf{Pseudocode executed at a node in $\mathcal{H}$ for computing GMIS.}}\label{fig-protocol-linear-gmis-distributed}
\vspace{-3ex}
\end{figure}

\subsection{The Analysis}

The pseudocode clearly indicates the runtime of the algorithm is poly-logarithmic. In this part, we focus on showing the correctness of the algorithm.

To begin with, we show that after part one of each main iteration, an equitable strict subhypergraph containing at least half of the undecided nodes is generated.

\begin{lemma}\label{lemma-linear-gmis-large-eq}
Assume at the beginning of an iteration there are $\hat{n}$ nodes in $\mathcal{H}$ that still have not decided whether to join the GMIS or not. Then, after part one of this iteration, there are at least $\hat{n}/2$ nodes in $active$ state, and they induce an equitable strict subhypergraph.
\end{lemma}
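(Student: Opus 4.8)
The plan is to mirror the proof of Lemma~\ref{lemma-linear-mis-large-eq}, transcribing every step from the dimension-based bookkeeping of the MIS algorithm to the threshold-based bookkeeping used here, while checking that the strict subhypergraph semantics do not break the argument. Recall that Part~I runs $\Theta(d\log\hat{n})$ inner iterations; in each one we recompute $n'$ (the number of $active$ nodes) and $u_i$ (the number of threshold-$i$ hyperedges in the strict subhypergraph induced by the $active$ nodes), and, if the $active$ nodes do not yet form an equitable strict subhypergraph, every $active$ node $v$ with $d_i(v)>(iu_i/n')\cdot\log^4 n'$ for some $1\le i\le d-1$ turns $idle$. I would first record the two monotonicity facts on which the whole scheme rests in the strict setting: as nodes become $idle$, the hyperedges of the strict subhypergraph only shrink (and are discarded once their size drops to their threshold), so no threshold-$i$ hyperedge is ever created. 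Hence both $u_i$ and, for any node $v$ that stays $active$, its threshold-$i$ degree $d_i(v)$ are non-increasing across inner iterations.

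With these facts in hand, the central progress claim carries over: if the $active$ nodes are non-equitable both before and after an inner iteration, then for some $i\le d-1$ the count $u_i$ drops by at least a factor $\log n'$. The reasoning is the same chain of inequalities as in Lemma~\ref{lemma-linear-mis-large-eq}: the node $v$ witnessing non-equitability after the iteration must have satisfied $d_i(v)\le(iu_i/n')\cdot\log^4 n'$ beforehand (else it would have turned $idle$), while $d_i'(v)>(iu_i'/n'')\cdot\log^5 n''$ afterwards; combining these with $d_i'(v)\le d_i(v)$ and the fact that $\log^5 x/x$ is decreasing (so replacing $n'$ by $n''\le n''$, i.e.\ $n''\le n'$, only enlarges the relevant factor) forces $u_i'<u_i/\log n'$.

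Next I would bound the number of inner iterations. By Fact~\ref{fact-linear-hypergraph-edge-num} the strict subhypergraph starts with fewer than ${\hat{n}\choose 2}<\hat{n}^2$ hyperedges, so each $u_i<\hat{n}^2$ initially; since a $u_i$ that is still non-equitable after an inner iteration falls by a factor $\log n'$, any single threshold can be charged at most $O(\log\hat{n})$ such drops before $u_i$ falls below one. Summing over the $d-1$ thresholds gives at most $O(d\log\hat{n})$ non-equitable inner iterations, which matches the loop bound in the pseudocode; and once all $u_i<1$ for $1\le i\le d-1$, every $d_i(v)=0$ and the strict subhypergraph is trivially equitable. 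To finish, I would bound how many nodes turn $idle$ per inner iteration. Here the one genuine departure from the MIS proof appears: in the threshold setting $\sum_{v}d_i(v)=\sum_{e:\,t_e=i}|e|$, which is no longer exactly $iu_i$ but lies between $(i+1)u_i$ and $d\cdot u_i$, since every surviving threshold-$i$ hyperedge has size in $(i,d]$. Because $d$ is a constant, an averaging argument still bounds the number of $active$ nodes exceeding the $idle$ threshold for a fixed $i$ by $\tfrac{d\cdot u_i}{(iu_i/n')\log^4 n'}=O(n'/\log^4 n')$; summing over the $d-1$ thresholds and the $O(d\log\hat{n})$ inner iterations yields a total of $O(d^2\hat{n}/\log^3\hat{n})=o(\hat{n})$ nodes removed, so at least $\hat{n}/2$ remain $active$.

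The main obstacle is exactly the strict subhypergraph bookkeeping: I must confirm that hyperedges shrinking (rather than vanishing) still leaves $u_i$ and $d_i(v)$ monotone non-increasing, and that replacing the clean identity $\sum_v d_i(v)=iu_i$ by the two-sided constant-factor bound $(i+1)u_i\le\sum_v d_i(v)\le d\,u_i$ does not degrade the counting. Once these two points are checked, the constant-dimension assumption absorbs all the extra factors of $d$, and the remainder of the argument is a faithful transcription of the MIS proof.
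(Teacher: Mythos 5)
Your proposal is correct and follows essentially the same route as the paper's own proof: the same progress claim (some $u_i$ drops by a factor $\log \hat{n}'$ in every non-equitable inner iteration), the same $O(d\log\hat{n})$ iteration count via $u_i<\hat{n}^2$ initially, and the same averaging bound of $O((d/i)\cdot \hat{n}'/\log^4\hat{n}')$ idle nodes per threshold per iteration. Your explicit treatment of the strict-subhypergraph monotonicity and of the two-sided bound $(i+1)u_i\le\sum_v d_i(v)\le d\,u_i$ is a slightly more careful rendering of steps the paper leaves implicit, but it is not a different argument.
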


\begin{proof}
During part one, we have an inner loop containing $\Theta(d\cdot\log{\hat{n}})$ iterations. Assume there are $\hat{n}'$ $active$ nodes at the beginning of an inner iteration. If $active$ nodes have not formed an equitable strict subhypergraph yet, then within this inner iteration, each $active$ node $v$ will check whether $d_i(v)>(iu_i/\hat{n}')\cdot\log^4{\hat{n}'}$ for some $1\leq i\leq d-1$. If such $d_i(v)$ exists, then $v$ will set itself as $idle$, and inform adjacent client nodes about this.

We now argue, if at the beginning of an inner iteration, the $\hat{n}'$ $active$ nodes do not form an equitable hypergraph, and by the end of this inner iteration, the updated hypergraph is still not equitable, then by the end of this inner iteration, for some $i$ where $1\leq i\leq d-1$, the number of active threshold $i$ hyperedges decrease by at least a factor of $\log{\hat{n}'}$. To see this, notice that for such an event to happen, there must exist some node $v$ and some $1\leq i\leq d-1$ such that, prior to this inner iteration $d_i(v)\leq(iu_i/\hat{n}')\cdot\log^4{\hat{n}'}$, and after this inner iteration $d'_i(v)>(iu'_i/\hat{n}'')\cdot\log^5{\hat{n}''}$. Notice, according to the definition, we know $d'_i(v)\leq d_i(v)$. If $u_i$ decrease by a factor less than $\log{\hat{n}'}$, then $d'_i(v)>(iu'_i/\hat{n}'')\cdot\log^5{\hat{n}''}>(iu_i/\hat{n}'')\cdot(1/\log{\hat{n}'})\cdot\log^5{\hat{n}''}\geq(iu_i/\hat{n}')\cdot(1/\log{\hat{n}'})\cdot\log^5{\hat{n}'}\geq d_i(v)$, a contradiction.

With the above claim, we argue $\Theta(d\cdot\log{\hat{n}})$ inner iterations are enough to generate an equitable strict subhypergraph. Assume prior to the first inner iteration, we have $x_i$ hyperedges with threshold $i$, where $1\leq i\leq d-1$. Due to Fact \ref{fact-linear-hypergraph-edge-num}, we know $\sum_{i=1}^{d-1}{x_i}\leq{\hat{n}\choose 2}<\hat{n}^2$, which implies $x_i<\hat{n}^2$. After each inner iteration, either we have an equitable strict subhypergraph, or the number of active threshold $i$ hyperedges is decreased by at least a factor of $\log{\hat{n}'}$ for some $i$, where $\hat{n}'$ is the number of active nodes prior to this inner iteration. Notice, once the number of active threshold $i$ hyperedges drops below one for all $i\leq d-1$, the resulting strict subhypergraph must be equitable. On the other hand, for $x_i$ to drop below one, we need at most $O(\log{\hat{n}})$ inner iterations. Hence, the total number of inner iterations we need is at most $O(d\cdot\log{\hat{n}})$.

Lastly, we argue that the equitable strict subhypergraph generated by part one contains at least $\hat{n}/2$ nodes. To see this, notice that for arbitrary $i$, prior to an inner iteration, if there are $\hat{n}'$ $active$ nodes in total, then there are at most $(d/i)\cdot\hat{n}'/\log^4{\hat{n}'}$ nodes satisfying $d_i(v)>(iu_i/\hat{n}')\cdot\log^4{\hat{n}'}$. Hence, during this inner iteration, we set at most $d^2\cdot\hat{n}'/\log^4{\hat{n}'}$ $active$ nodes to $idle$. Since there are only $O(d\cdot\log{\hat{n}})$ iterations, we know after part one, the generated equitable strict subhypergraph contains at least $\hat{n}/2$ nodes.
\end{proof}

In the following discussion, we consider part two and three of the main iteration. Particularly, we show that if the generated equitable strict subhypergraph contains sufficiently many nodes, then after part two and three, with at least constant probability, lots of previously undecided nodes (in the equitable strict subhypergraph) will make up their minds.

We focus on the most involved case in which $\log^8{n'}/n'\leq\hat{a}\leq e^{-6}$.

\begin{lemma}\label{lemma-linear-gmis-remove-nodes-case2}
Assume after part one of an iteration there are $n'$ nodes in the generated equitable strict subhypergraph. Further assume during part two $\log^8{n'}/n'\leq\hat{a}\leq e^{-6}$. Then, after part three of this iteration, with at least some constant probability, at least $\Theta(n'/\log^{8}{n'})$ previously undecided nodes will decide whether to join the GMIS or not.
\end{lemma}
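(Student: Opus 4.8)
The plan is to mirror the structure of the proof of Lemma~\ref{lemma-linear-mis-remove-nodes-case2} (the MIS case), but with an indicator tailored to thresholds and exploiting the fact that the dimension is a constant $d$. Throughout, let $\mathcal{W}$ denote the set of $elected$ (marked) nodes; recall that in this case each node of the equitable strict subhypergraph $\mathcal{H}'=(\mathcal{V}',\mathcal{E}')$ is placed in $\mathcal{W}$ independently with probability $\hat{a}$, and that by the pseudocode a marked node $w$ becomes $included$ exactly when every hyperedge $e\ni w$ satisfies $|e\cap\mathcal{W}|\leq t_e$. For a hyperedge $e\in\mathcal{E}'$ I would define the indicator $X_e=1$ iff $|e\cap\mathcal{W}|=t_e$ (exactly the threshold many nodes of $e$ are marked) and set $X=\sum_{e\in\mathcal{E}'}X_e$; this is the correct analogue of the MIS event $|e\cap\mathcal{W}|=|e|-1$, since for MIS one has $t_e=|e|-1$. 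The payoff is that, because $|e|\leq d=O(1)$, we get $\mathbb{E}(X_e)=\binom{|e|}{t_e}(1-\hat{a})^{|e|-t_e}\hat{a}^{t_e}=\Theta(\hat{a}^{t_e})$, so the defining inequality $n'/\log^8{n'}\leq\sum_{i=1}^{d-1}u_i\hat{a}^{i}\leq 2n'/\log^8{n'}$ immediately yields $\mathbb{E}(X)=\Theta(n'/\log^8{n'})$.

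The first claim I would establish is that $X=\Theta(n'/\log^8{n'})$ with at least constant probability, via Chebyshev's inequality. The term $\sum_e\mathrm{Var}(X_e)\leq\mathbb{E}(X)$ is immediate. For the covariance sum, linearity of $\mathcal{H}'$ means any two intersecting $e,e'$ meet in a single node $w$, so conditioning on whether $w$ is marked gives $\mathbb{E}(X_eX_{e'})=\Theta(\hat{a}^{t_e+t_{e'}-1})$. Summing and using equitability in the form $\Delta_i=\max_{v}d_i(v)\leq(iu_i/n')\log^5{n'}$ to bound, for each fixed $e$, $\sum_{e':e\cap e'\neq\emptyset}\hat{a}^{t_{e'}}\leq|e|\cdot(\log^5{n'}/n')\sum_{i=1}^{d-1}iu_i\hat{a}^{i}=O(1/\log^3{n'})$, yields $\sum_{e\cap e'\neq\emptyset}\mathrm{Cov}(X_e,X_{e'})=O((\mathbb{E}(X))^2/\log^3{n'})$. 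Here the constant dimension removes the need for the delicate truncation-at-$\log{n'}$ tail estimates that dominated Figure~\ref{fig-eqnarray-1} in the MIS proof, since there are only $O(d)$ distinct thresholds and all hyperedges have size at most $d$. Hence $\mathrm{Var}(X)=O((\mathbb{E}(X))^2/\log^3{n'})$ and Chebyshev delivers the claim.

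The second step discards the good hyperedges (those with $X_e=1$) that fail to produce a freshly excluded node. There are two failure modes: a good $e$ fails to force an exclusion if one of its $t_e$ marked nodes $w$ is \emph{blocked}, i.e.\ $w$ lies in another hyperedge $e'$ with $|e'\cap\mathcal{W}|\geq t_{e'}+1$ so that $w$ never becomes $included$; and two good hyperedges could try to exclude the \emph{same} node if they share an unmarked node. I would let $Y$ count the intersecting pairs $(e,e')$ with $X_e=1$ witnessing either mode. Conditioning on the shared node, each such pair occurs with probability $\Theta(\hat{a}^{t_e+t_{e'}})$ (crucially one power of $\hat{a}$ smaller than the covariance terms, because blocking forces the shared node to be marked), so the same equitability bound gives $\mathbb{E}(Y)=O(1/\log^3{n'})\cdot\sum_e\hat{a}^{t_e}=O(n'/\log^{11}{n'})$, and Markov makes $Y=O(n'/\log^{10}{n'})$ with constant probability. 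Deleting the at most $2Y$ good hyperedges lying in some bad pair leaves a collection $\tilde{\mathcal{E}}'$ of size $\Theta(n'/\log^8{n'})$ in which no hyperedge is blocked and no two share an unmarked node. For each $e\in\tilde{\mathcal{E}}'$ all $t_e$ marked nodes become $included$, so $e$ reaches its threshold and every unmarked node of $e$ is $excluded$ in part three; picking one unmarked node per $e$ gives $\Theta(n'/\log^8{n'})$ \emph{distinct} newly decided nodes. (Fact~\ref{fact-strict-subhypergraph-gmis} guarantees the set being built is a valid generalized independent set, so these decisions are consistent with $\mathcal{H}$.)

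The main obstacle I expect is this second step rather than the variance bound. In the MIS setting the only way a nearly-marked hyperedge could fail was a fully-marked intersecting hyperedge, and the single unmarked node made distinctness of the excluded nodes automatic. For GMIS the threshold is decoupled from the size, so I must separately control (a) marked nodes being blocked by neighbours that are merely \emph{over} their own threshold, and (b) collisions of excluded nodes when a hyperedge has several unmarked nodes. Verifying that both bad events share the single per-pair probability $\Theta(\hat{a}^{t_e+t_{e'}})$, and that this is a $\log{n'}$-factor below $\mathbb{E}(X)$, is exactly what makes the discarding argument go through; the rest of the estimates are routine once the constant-dimension simplifications are in place.
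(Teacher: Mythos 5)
Your proposal is correct and follows essentially the same route as the paper: the indicator $X_e=\mathbf{1}[|e\cap\mathcal{W}|=t_e]$ with $\mathbb{E}(X)=\Theta(n'/\log^8 n')$ and Chebyshev via the covariance bound $O((\mathbb{E}(X))^2/\log^3 n')$, followed by a Markov bound of $O(n'/\log^{10} n')$ on bad intersecting pairs. Your two explicit failure modes (a marked node blocked by an over-threshold neighbour, and two good hyperedges sharing an unmarked node) are exactly the pairs the paper captures with the single condition $|e\cap\mathcal{W}|\geq t_e$ and $|(e'\setminus e)\cap\mathcal{W}|\geq t_{e'}$, and your pruning of all hyperedges in bad pairs makes the distinctness of the excluded nodes slightly more explicit than the paper's condition (c).
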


\begin{proof}
Let $\mathcal{H}'=(\mathcal{V}',\mathcal{E}')$ be the generated equitable hypergraph. Let $\mathcal{W}$ be the set of marked (i.e., $elected$) nodes during part two. Let $X$ be a random variable denoting the number of hyperedges $e$ in $\mathcal{H}'$ satisfying $|e\cap\mathcal{W}|=t_e$. The proof relies on two key claims.

\begin{claim}\label{claim-linear-gmis-remove-nodes-case2-claim1}
With at least some constant probability, $X=\Theta(n'/\log^8{n'})$.
\end{claim}

\begin{proof}
For a hyperedge $e$ in the generated equitable hypergraph $\mathcal{H}'$, define $X_e$ to be an indicator random variable taking value one iff $|e\cap\mathcal{W}|=t_e$.

We now calculate $\mathbb{E}(X)$. By linearity of expectation, we know $\mathbb{E}(X)=\sum_{e\in\mathcal{E}'}{\mathbb{E}(X_e)}=\sum_{e\in\mathcal{E'}}{({|e|\choose t_e}\cdot\hat{a}^{t_e}\cdot(1-\hat{a})^{|e|-t_e})}$. Since $|e|\leq d=\Theta(1)$ and $\log^8{n'}/n'\leq\hat{a}\leq e^{-6}$, we know ${|e|\choose t_e}=\Theta(1)$ and $(1-\hat{a})^{|e|-t_e}=\Theta(1)$. Therefore, we know $\mathbb{E}(X)=\Theta(1)\cdot\sum_{e\in\mathcal{E'}}{\hat{a}^{t_e}}=\Theta(1)\cdot\sum_{i=1}^{d-1}{(u_i\cdot\hat{a}^i)}=\Theta(1)\cdot\Theta(n'/\log^8{n'})=\Theta(n'/\log^8{n'})$.

We will show the concentration of $X$ via Chebyshev's inequality, and hence calculate the variance of $X$: $\mathrm{Var}(X)=\mathrm{Var}(\sum_{e\in\mathcal{E'}}{X_e})=\sum_{e\in\mathcal{E'}}{\mathrm{Var}(X_e)}+\sum_{e,e'\in\mathcal{E'};e\cap e'\neq\emptyset}{\mathrm{Cov}(X_e,X_{e'})}$.

Since $\mathrm{Var}(X_e)=\mathbb{E}(X_e^2)-(\mathbb{E}(X_e))^2\leq\mathbb{E}(X_e^2)=\mathbb{E}(X_e)$, we know $\sum_{e\in\mathcal{E'}}{\mathrm{Var}(X_e)}\leq\mathbb{E}(X)$.

On the other hand, notice:

\vspace{-2ex}
\begin{align*}
\sum_{e,e'\in\mathcal{E'};e\cap e'\neq\emptyset}{\mathrm{Cov}(X_{e},X_{e'})} \quad = & \quad \sum_{e,e'\in\mathcal{E'};e\cap e'\neq\emptyset}{(\mathbb{E}(X_{e} X_{e'})-\mathbb{E}(X_{e}) \mathbb{E}(X_{e'}))} \\
\leq & \quad \sum_{e,e'\in\mathcal{E'};e\cap e'\neq\emptyset}{\mathbb{E}(X_{e}X_{e'})}
\end{align*}

Fix two hyperedges $e$ and $e'$ such that $e\cap e'\neq\emptyset$. Since $X_e$ and $X_{e'}$ are indicator random variables, we know $\mathbb{E}(X_eX_{e'})=\mathbb{P}(X_e=1\wedge X_{e'}=1)$. Since $\mathcal{H'}$ is a linear hypergraph, assume $e\cap e'=\{v\}$. (I.e., $e$ and $e'$ overlaps on node $v$.) By the definition of $X_e$ and $X_{e'}$, event ``$X_e=1\wedge X_{e'}=1$'' happens iff one of the two following (disjoint) events happens: (a) $v$ is marked, $t_e-1$ of the $|e|-1$ nodes in $e\backslash\{v\}$ are marked, and $t_{e'}-1$ of the $|e'|-1$ nodes in $e'\backslash\{v\}$ are marked; or (b) $v$ is not marked, $t_e$ of the $|e|-1$ nodes in $e\backslash\{v\}$ are marked, and $t_{e'}$ of the $|e'|-1$ nodes in $e'\backslash\{v\}$ are marked.

Therefore, we can further bound $\sum_{e\cap e'\neq\emptyset}{\mathbb{E}(X_{e}X_{e'})}$:

\vspace{-2ex}
\begin{align*}
\sum_{e\cap e'\neq\emptyset}{\mathbb{E}(X_{e}X_{e'})} = & \sum_{e\cap e'\neq\emptyset}{\left( \hat{a}\cdot{|e|-1\choose t_e-1}\cdot\hat{a}^{t_e-1}\cdot(1-\hat{a})^{|e|-t_e}\cdot{|e'|-1\choose t_{e'}-1}\cdot\hat{a}^{t_{e'}-1}\cdot(1-\hat{a})^{|e'|-t_{e'}} \right)} +\\
\phantom{+} & \sum_{e\cap e'\neq\emptyset}{\left( (1-\hat{a})\cdot{|e|-1\choose t_e}\cdot\hat{a}^{t_e}\cdot(1-\hat{a})^{|e|-t_e-1}\cdot{|e'|-1\choose t_{e'}}\cdot\hat{a}^{t_{e'}}\cdot(1-\hat{a})^{|e'|-t_{e'}-1} \right)}\\
\leq & \frac{2}{\hat{a}}\cdot\sum_{e\cap e'\neq\emptyset}{\left( {|e|\choose t_e}\cdot\hat{a}^{t_e}\cdot{|e'|\choose t_{e'}}\cdot\hat{a}^{t_{e'}} \right)} \leq \frac{2n'}{\log^8{n'}}\cdot\sum_{e\cap e'\neq\emptyset}{\left( \Theta(1)\cdot\hat{a}^{t_e}\cdot\Theta(1)\cdot\hat{a}^{t_{e'}} \right)}\\
\leq & \Theta\left(\frac{n'}{\log^8{n'}}\right)\cdot\sum_{e\in\mathcal{E'}}{\left( \hat{a}^{t_e}\cdot\sum_{e\cap e'\neq\emptyset}{\hat{a}^{t_{e'}}} \right)}
\end{align*}

Next, we need to estimate $\sum_{e\in\mathcal{E'}}{(\hat{a}^{t_e}\cdot\sum_{e\cap e'\neq\emptyset}{\hat{a}^{t_{e'}}})}$ to upper bound $\sum_{e\cap e'\neq\emptyset}{\mathbb{E}(X_{e}X_{e'})}$. Define $\Delta_i=\max_{v\in\mathcal{V'}}{d_i(v)}$. Since $\mathcal{H'}$ is equitable, we have $\Delta_i\leq(iu_i/n')\cdot\log^5{n'}$ for $1\leq i\leq d-1$. Fix a hyperedge $e\in\mathcal{E'}$, we now give an upper bound of $\sum_{e\cap e'\neq\emptyset}{\hat{a}^{t_{e'}}}$. Consider an arbitrary node $v\in e$. For every $1\leq j\leq d-1$, we know $v$ is contained within $d_j(v)\leq\Delta_j$ hyperedges of threshold $j$. Meanwhile, $|e|\leq d$. Hence, $\sum_{e\cap e'\neq\emptyset}{\hat{a}^{t_{e'}}}\leq d\cdot\sum_{j=1}^{d-1}{(\Delta_j\cdot\hat{a}^j)}$.

As a result, we know:

\vspace{-2ex}
\begin{align*}
\sum_{e\cap e'\neq\emptyset}{\mathbb{E}(X_{e}X_{e'})} \quad \leq & \quad \Theta\left(\frac{n'}{\log^8{n'}}\right)\cdot\sum_{e\in\mathcal{E'}}{\left( \hat{a}^{t_e}\cdot\sum_{e\cap e'\neq\emptyset}{\hat{a}^{t_{e'}}} \right)}\\
\leq & \quad \Theta\left(\frac{n'}{\log^8{n'}}\right)\cdot\sum_{e\in\mathcal{E'}}{\left( \hat{a}^{t_e}\cdot d\cdot\sum_{j=1}^{d-1}{(\Delta_j\cdot\hat{a}^j)} \right)}\\
\leq & \quad \Theta\left(\frac{1}{\log^3{n'}}\right)\cdot\sum_{e\in\mathcal{E'}}{\left( \hat{a}^{t_e}\cdot\sum_{j=1}^{d-1}{(u_j\cdot\hat{a}^j)} \right)}\\
\leq & \quad \Theta\left(\frac{1}{\log^3{n'}}\right)\cdot\sum_{e\in\mathcal{E'}}{\left(\hat{a}^{t_e}\cdot\frac{2n'}{\log^8{n'}}\right)}
\end{align*}

Recall that we have previously shown $\mathbb{E}(X)=\Theta(1)\cdot\sum_{e\in\mathcal{E'}}{\hat{a}^{t_e}}=\Theta(n'/\log^8{n'})$. Thus, $\sum_{e\cap e'\neq\emptyset}{\mathbb{E}(X_{e}X_{e'})}\leq \Theta(1/\log^3{n'})\cdot\mathbb{E}(X)\cdot\sum_{e\in\mathcal{E}'}{\hat{a}^{t_e}}=\Theta(1/\log^3{n'})\cdot(\mathbb{E}(X))^2$.

By now, we know $\mathrm{Var}(X)\leq\mathbb{E}(X)+\Theta(1/\log^3{n'})\cdot(\mathbb{E}(X))^2$. Recall $\mathbb{E}(X)=\Theta(n'/\log^8{n'})$, thus $\mathrm{Var}(X)=O((\mathbb{E}(X))^2/\log^3{n'})$. Hence, the claim follows by Chebyshev's inequality.
\end{proof}

\begin{claim}\label{claim-linear-gmis-remove-nodes-case2-claim2}
With at least some constant probability, $\mathcal{H'}$ contains at most $O(n'/\log^{10}{n'})$ pairs of hyperedges $e,e'$ for which $e\cap e'\neq\emptyset$, $|e\cap\mathcal{W}|\geq t_e$, and $|(e'\backslash e)\cap\mathcal{W}|\geq t_{e'}$.
\end{claim}

\begin{proof}
Let $Y$ denote the number of such pairs of hyperedges. Recall we have defined $\Delta_i=\max_{v\in\mathcal{V'}}{d_i(v)}$; and hence know $\Delta_i\leq(iu_i/n')\cdot\log^5{n'}$ for $1\leq i\leq d-1$.

As a result, we can bound $\mathbb{E}(Y)$ as follows:

\vspace{-2ex}
\begin{align*}
\mathbb{E}(Y) \leq & \sum_{e\in\mathcal{E'}}{\left( {|e|\choose t_e}\cdot\hat{a}^{t_e}\cdot\sum_{e\cap e'\neq\emptyset}{\left({|e'|-1\choose t_{e'}}\cdot\hat{a}^{t_{e'}}\right)} \right)} \leq \left[{d\choose d/2}\right]^2\cdot\sum_{e\in\mathcal{E}'}{\left(\hat{a}^{t_e}\cdot\sum_{e\cap e'\neq\emptyset}{\hat{a}^{t_{e'}}}\right)}\\
\leq & \Theta(1)\cdot\sum_{e\in\mathcal{E}'}{\left( \hat{a}^{t_e}\cdot d\cdot\sum_{i=1}^{d-1}{(\Delta_i\cdot\hat{a}^i)} \right)} \leq \Theta\left(\frac{\log^5{n'}}{n'}\right)\cdot\sum_{e\in\mathcal{E}'}{\left( \hat{a}^{t_e}\cdot\sum_{i=1}^{d-1}{(u_i\cdot\hat{a}^{i})} \right)}\\
= & \Theta\left(\frac{1}{\log^3{n'}}\right)\cdot\sum_{e\in\mathcal{E}'}{\hat{a}^{t_e}} = \Theta\left(\frac{1}{\log^3{n'}}\right)\cdot\sum_{i=1}^{d-1}{(u_i\cdot\hat{a}^{i})} = \Theta\left(\frac{n'}{\log^{11}{n'}}\right)
\end{align*}

By Markov's inequality, the claim follows.
\end{proof}

The above two claims show that in each iteration, with at least some constant probability, in $\mathcal{H'}$ there exists a set $\tilde{\mathcal{E}}'$ of hyperedges of cardinality $\Theta(n'/\log^8{n'})$ such that: (a) for each $e\in\tilde{\mathcal{E}}'$, exactly $t_e$ nodes are in $\mathcal{W}$; (b) for $e\in\tilde{\mathcal{E}}'$ and $e'\in\mathcal{E}'$, if $e\cap e'\neq\emptyset$ then $|(e'\backslash e)\cap\mathcal{W}|< t_{e'}$; and (c) for $e\in\tilde{\mathcal{E}}'$ and $e'\in\tilde{\mathcal{E}}'$, there exist $v\in e$ and $v'\in e'$ such that $v\neq v'$ and both $v,v'$ are not in $\mathcal{W}$. Now, notice that (a) and (b) imply that after part three of the iteration, for each hyperedge in $\tilde{\mathcal{E}}'$, at least one node has decided to not be in the GMIS. Moreover, condition (c) guarantees that these nodes are different. Therefore, we have proved the lemma.
\end{proof}

The remaining two cases (namely, $\hat{a}\leq\log^8{n'}/n'$ and $\hat{a}\geq e^{-6}$) are simpler, interested readers can refer to Lemma \ref{lemma-linear-gmis-remove-nodes-case1} and Lemma \ref{lemma-linear-gmis-remove-nodes-case3} in Appendix \ref{appdix-gmis-lemma} for more details.

Finally, we conclude that these lemmas prove the correctness of our algorithm.

\begin{theorem}\label{thm-linear-gmis}
In the CONGEST model, there exists a distributed algorithm that computes a GMIS for constant dimension linear hypergraphs within poly-logarithmic time, w.h.p.
\end{theorem}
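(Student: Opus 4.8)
The plan is to assemble the theorem exactly as in the MIS case (Theorem \ref{thm-linear-mis}), by first reducing to low-diameter instances and then bounding the progress of the core algorithm of Figure \ref{fig-protocol-linear-gmis-distributed} per outer iteration. First I would invoke Lemma \ref{lemma-decomp}: it suffices to exhibit an algorithm $\mathcal{A}$ that computes a GMIS for an $n'$-node linear hypergraph of diameter $O(\log n)$ in poly-logarithmic time $T(n')$, since this yields an $O(T(n)\cdot\log n + \log^2 n)$-time algorithm on arbitrary $\mathcal{H}$. I would take $\mathcal{A}$ to be the core algorithm, whose running time is immediate from the pseudocode: the outer loop runs $\Theta(\log^{18}{n})$ times, and because $d=O(1)$ the inner loop of $\Theta(d\log\hat{n})$ rounds in part one and the aggregation subroutines used in all three parts each cost only poly-logarithmically many CONGEST rounds, so $T(n')=\mathrm{poly}\log(n')$.

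Next I would establish per-iteration progress. By Lemma \ref{lemma-linear-gmis-large-eq}, part one of each outer iteration carves out an equitable strict subhypergraph $\mathcal{H}'$ retaining at least half of the $\hat{n}$ currently undecided nodes, so $n'\geq\hat{n}/2$. Splitting on the value of $\hat{a}$ into the three cases $\hat{a}\leq\log^8{n'}/n'$, $\log^8{n'}/n'\leq\hat{a}\leq e^{-6}$, and $\hat{a}\geq e^{-6}$, Lemma \ref{lemma-linear-gmis-remove-nodes-case2} together with the two appendix lemmas (Lemma \ref{lemma-linear-gmis-remove-nodes-case1} and Lemma \ref{lemma-linear-gmis-remove-nodes-case3}) shows that in each case, with at least some constant probability, $\Omega(n'/\mathrm{poly}\log n')$ previously undecided nodes reach a final decision. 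Taking the worst of the three exponents and using $n'\geq\hat{n}/2$, I obtain the clean statement that whenever $\hat{n}$ exceeds a sufficiently large constant $c_1$, one outer iteration decides an $\Omega(1/\mathrm{poly}\log n)$ fraction of the undecided nodes with constant probability.

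I would then convert this constant-probability per-iteration guarantee into a w.h.p.\ termination bound by a standard concentration argument. Reducing the undecided count from $n$ down to $c_1$ requires only $O(\mathrm{poly}\log n)$ \emph{successful} iterations, since each success shrinks the count by a $(1-\Omega(1/\mathrm{poly}\log n))$ factor; as each of the $\Theta(\log^{18}{n})$ outer iterations succeeds with at least constant probability (conditioned on the instance not yet being reduced below $c_1$), a Chernoff bound guarantees enough successes w.h.p., which is exactly why the loop length in the pseudocode is chosen to be $\Theta(\log^{18}{n})$. The residual constant-sized set is cleaned up as in the MIS argument: once $n'\leq c_1$, either $\hat{a}$ is so small that a single node is elected and decides, or each node is sampled with constant probability and, since $n'=O(1)$, with at least constant probability exactly one node is elected and joins the GMIS; either way one more node decides per iteration, so poly-logarithmically many further iterations finish the instance w.h.p.

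Finally I would argue correctness, and here lies the one genuinely GMIS-specific subtlety. The set of \emph{included} nodes produced within each $\mathcal{H}'$ is, by the acceptance test in part two, a generalized independent set of $\mathcal{H}'$; the crucial point is that $\mathcal{H}'$ is a \emph{strict} subhypergraph, so Fact \ref{fact-strict-subhypergraph-gmis} lets me lift this to a valid generalized independent set of the ambient hypergraph, rather than relying on the naive Fact \ref{fact-subhypergraph-mis}, which fails for thresholds. I would maintain across the outer iterations the invariant that the accumulated $\mathcal{I}$ stays a generalized independent set of $\mathcal{H}$ and that the excluded nodes are precisely those whose addition to $\mathcal{I}$ would already overrun some $t_e$; at termination every node is included or excluded, which yields maximality. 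The main obstacle is therefore not any single calculation---those are packaged into the cited lemmas---but the careful threshold bookkeeping across iterations under the strict-subhypergraph definition, together with pinning down exactly where the constant-dimension hypothesis is spent: the factors ${|e|\choose t_e}$ and $(1-\hat{a})^{|e|-t_e}$ are treated as $\Theta(1)$ and the threshold classes number only $d-1=O(1)$, both of which would break for super-constant dimension and are precisely what confine this CONGEST result to constant-dimension hypergraphs.
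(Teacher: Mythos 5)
Your proposal is correct and follows essentially the same route as the paper: reduce to $O(\log n)$-diameter instances via Lemma \ref{lemma-decomp}, use Lemma \ref{lemma-linear-gmis-large-eq} together with the three case lemmas (Lemmas \ref{lemma-linear-gmis-remove-nodes-case2}, \ref{lemma-linear-gmis-remove-nodes-case1}, \ref{lemma-linear-gmis-remove-nodes-case3}) to get constant-probability $\Omega(n'/\log^{17}n')$ progress per outer iteration, amplify to w.h.p.\ over the $\Theta(\log^{18}n)$ iterations, and finish the residual constant-sized instance one node at a time. Your added remarks on the strict-subhypergraph invariant (Fact \ref{fact-strict-subhypergraph-gmis}) and on where the constant-dimension assumption is consumed are consistent with, and slightly more explicit than, the paper's own writeup.
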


\begin{proof}
Lemma \ref{lemma-linear-gmis-large-eq}, \ref{lemma-linear-gmis-remove-nodes-case2}, \ref{lemma-linear-gmis-remove-nodes-case1}, and \ref{lemma-linear-gmis-remove-nodes-case3} tell us: if prior to an outer iteration there are $n'$ undecided nodes, then after this iteration, with at least some constant probability, at least $\Theta(n'/\log^{17}{n'})$ of these nodes will decide, provided that $n'$ is sufficiently large. Since $n'\leq n$, this means after at most some poly-logarithmic (w.r.t.\ $n$) outer iterations, the number of undecided nodes will be reduced to a sufficiently large constant $c_1$, w.h.p.

Now, once the number of undecided nodes is reduced to $c_1$, during part two of an outer iteration, one of the two following scenarios will happen: (a) $p_0\leq\log^8{n'}/n'$, in which case only one node is selected into $\mathcal{W}$; or (b) $e^{-6}\geq p_0>\log^8{n'}/n'$, in which case each of the $n'$ nodes is selected with probability $p_0$. In the first case, the single selected node will decide after this iteration. In the second case, since $n'\leq c_1$ is a constant, we know with at least constant probability only one of the $n'$ nodes will be selected into $\mathcal{W}$, and will decide after this iteration. Therefore, we can conclude when the number of undecided nodes is at most $c_1$, after each iteration, with at least constant probability, at least one node will decide.

At this point, we can claim that after at most some poly-logarithmic (w.r.t.\ $n$) outer iterations, all nodes in $\mathcal{H}$ will decide, w.h.p. Moreover, it is easy to see that the result indeed is a GMIS of $\mathcal{H}$. Combine this with Lemma \ref{lemma-decomp}, and we have proved the theorem.
\end{proof}

\section{Summary and Discussion}

In this paper, we study the problem of efficient computation of MIS and GMIS in linear hypergraphs in the CONGEST model. In particular, we have developed a poly-logarithmic time randomized algorithm for computing an MIS in arbitrary linear hypergraphs. We have then generalized this algorithm and devised a variant that is able to compute a GMIS in constant dimension linear hypergraphs, again in poly-logarithmic time.

To the best of our knowledge, this is the first work that defines the GMIS problem and devises non-trivial algorithms for computing it. We believe this problem deserves further investigation. On the one hand, it can potentially model many real-world problems that involve multi-party interactions; on the other hand, it is also a challenging symmetry breaking problem and solving it efficiently seems to require the development of novel techniques.

A natural question to ask is how to efficiently compute GMIS for linear hypergraphs with super-constant dimension, in the CONGEST model? (For the LOCAL model, recall that Theorem \ref{thm-gmis-local} already gives the answer.) Why does an algorithm (or, the techniques behind it) that can solve MIS for arbitrary dimension linear hypergraphs stops at constant dimension for GMIS? It turns out there are several difficulties. To begin with, for the key parameter $\hat{a}$, in the GMIS setting, instead of our current definition, the most natural one should actually be $\sum_{i=2}^{d}\sum_{j=1}^{i-1}{i\choose j}\cdot u_{i,j}\cdot\hat{a}^j\cdot(1-\hat{a})^{i-j}=\Theta(n'/\log^{\Theta(1)}{n'})$, where $u_{i,j}$ is the number of size $i$ hyperedges with threshold $j$. However, this definition would break the proof of Lemma \ref{lemma-linear-gmis-remove-nodes-case2}. Particularly, the analysis for Claim \ref{claim-linear-gmis-remove-nodes-case2-claim2} is no longer valid. On the other hand, once we introduce the notion of $u_{i,j}$, the definition for equitable hypergraph also needs to be adjusted: in the GMIS setting, $\mathcal{H}$ is equitable if it contains not too many nodes, or for each node $v$, for each $i$ where $2\leq i\leq d$, for each $j$ where $1\leq j\leq i-1$, it holds that $d_{i,j}(v)\leq(i u_{i,j}/n)\cdot\log^{\Theta(1)}{n}$. Unfortunately, this definition could greatly increase the time complexity of the equitable subhypergraph generation algorithm: for given $i$ and $j$, the value of $d_{i,j}(v)$ is not necessarily monotonically decreasing over multiple iterations. To summarize, we have the feeling that GMIS might be fundamentally harder than MIS, and that obtaining more general solutions might require non-trivial novel algorithmic techniques.

\clearpage
\bibliographystyle{plainurl}
\bibliography{ref}

\clearpage
\appendix
\section*{Appendix}

\section{More Details on Obtaining Required Parameters}\label{appdix-obtain-para}

Here we describe how to obtain $n'$ and $u_i(\mathcal{H}')$ in more detail.

\begin{itemize}
	\item \textbf{Leader election.} The first building block is leader election. Specifically, in the simple graph which is the server-client representation of $\mathcal{H}$, we need to elect a single node $v$ as the leader, so that every node (including $v$ itself) knows $v$ is the leader.
	
	Assume each node has a unique identity, we will elect the one with the largest identity to be the leader.\footnote{Notice, we have already assumed each server node (i.e., each node in $\mathcal{H}$) has a unique identity. In case client nodes do not have unique identities, they can randomly sample one from a sufficiently large pool (say, a pool of size $\Omega(n^3)$).} To achieve this, we only need to let each node broadcast the largest identity it has seen for a certain number of slots. (In the first slot, each node broadcasts its own identity.) Since the server-client representation of $\mathcal{H}$ has diameter $O(\log{n})$, we know we can elect a leader in $O(\log{n})$ time slots.
	
	\item \textbf{Tree construction.} The second building block is spanning tree construction. More specifically, we want to build a BFS tree on the server-client representation of $\mathcal{H}$. The procedure is based on the distributed implementation of Bellman and Ford's algorithm (see, e.g., Section 5.3 of \cite{peleg00}).
	
	Particularly, to construct this tree, we first run the leader election procedure described above, and let the leader be the root. In each time slot, each node that is already in the tree sends its distance to the root to all of its neighbors, along with its identity. (So in the first time slot, only the root sends a message.) Each node that receives a message and is not in the tree yet will add itself to the tree. Each such node will also know its parent in the tree, and its distance to the root. Since the server-client representation of $\mathcal{H}$ has diameter $O(\log{n})$, we know this tree can be constructed in $O(\log{n})$ time slots.
	
	\item \textbf{Compute $n'$.} With a BFS tree, we can now count the number of nodes in $\mathcal{H}'$, and let each node in the server-client representation of $\mathcal{H}'$ be aware of this count.
	
	To achieve this goal, we first aggregate the count from the leaves to the root, and then let the root broadcast the count. More specifically, once tree construction is done, the following step will be repeated for $\Theta(\log{n})$ times: in each time slot, each node $v$ in the tree will send to its parent the number of nodes it knows that are in $V(\mathcal{H}')$ and are contained within the subtree rooted at $v$, provided that they have not been counted previously. Effectively, this means in the first time slot each server node in the tree (i.e., each node in $\mathcal{H}'$) will send one to its parent; and after the first time slot, each node in the tree will know how many of its one-hop children are in $V(\mathcal{H}')$. In general, after $k$ slots, for each node in the tree, it will know among its $k$-hop descendants, how many are in $V(\mathcal{H}')$. Since the tree has depth $O(\log{n})$, the root will know the number of nodes in $\mathcal{H}'$ in $O(\log{n})$ time. Once the root knows the count, it can broadcast this count, which takes an additional $O(\log{n})$ time. To sum up, we can conclude that each node in the server-client representation of $\mathcal{H}'$ will know $n'$ in $O(\log{n})$ time slots.
	
	\item \textbf{Compute $u_i$.} Finally, we discuss how to count $u_i$ for $2\leq i\leq\log{n'}$. Particularly, for each $2\leq i\leq\log{n'}$, we need each node in the server-client representation of $\mathcal{H}'$ to know $u_i$. To accomplish this task, we need $\log{n}$ iterations, each of which is similar to the node counting procedure described above: we reuse the tree constructed for node counting, and aggregate the count (of $u_i$) from the leaves to the root, and finally let the root broadcast $u_i$ to all other nodes. The correctness argument is similar to the one for node counting, and the total time consumption will be $O(\log^2{n})$.
\end{itemize}

\section{Omitted Lemmas for the Analysis of the GMIS Algorithm}\label{appdix-gmis-lemma}

\begin{lemma}\label{lemma-linear-gmis-remove-nodes-case1}
Assume after part one of an iteration there are $n'$ nodes in the generated equitable strict subhypergraph. Further assume during part two $\hat{a}\leq\log^8{n'}/n'$. Then, after part three of this iteration, at least $\Theta(n'/\log^{17}{n'})$ previously undecided nodes will decide whether to join the GMIS or not.
\end{lemma}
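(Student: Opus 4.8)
The plan is to mirror the argument for the MIS analogue, Lemma~\ref{lemma-linear-mis-remove-nodes-case1}, with \emph{threshold-one} hyperedges playing the role that size-two hyperedges played there. The guiding intuition is that when $\hat a\le\log^8 n'/n'$ is this small, the defining inequality $\sum_{i=1}^{d-1}u_i\hat a^i\ge n'/\log^8 n'$ forces the $i=1$ term $u_1\hat a$ to carry essentially all of the mass, which in turn makes $u_1$ enormous, so some node lies in very many threshold-one hyperedges. Since the algorithm, in exactly this regime ($p_0=\min\{\hat a,e^{-6}\}=\hat a\le\log^8 n'/n'$, the second equality holding because $\log^8 n'/n'\le e^{-6}$ for large $n'$), elects the single node $v$ returned by $\texttt{MaxD1}$, I will show that including $v$ triggers the exclusion of one fresh node from each of its threshold-one hyperedges.

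First I would isolate $u_1$. Bounding the higher terms by $\sum_{i=2}^{d-1}u_i\hat a^i\le \hat a^2\sum_{i=2}^{d-1}u_i\le \hat a^2\cdot{n'\choose 2}$ via Fact~\ref{fact-linear-hypergraph-edge-num}, the case hypothesis $\hat a\le\log^8 n'/n'$ yields $\sum_{i=2}^{d-1}u_i\hat a^i=O(\log^{16}n')$, which is negligible against $n'/\log^8 n'$ once $n'$ is large. Hence $u_1\hat a\ge n'/(2\log^8 n')$, and since $1/\hat a\ge n'/\log^8 n'$ I obtain $u_1\ge (n')^2/(2\log^{16}n')$. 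Because every threshold-one hyperedge of the (linear) equitable subhypergraph $\mathcal H'$ has at least two nodes, $\sum_{v}d_1(v)=\sum_{e\in U_1}|e|\ge 2u_1$, so the node $v$ maximizing $d_1$ satisfies $d_1(v)\ge 2u_1/n'\ge n'/\log^{16}n'\ge n'/\log^{17}n'$.

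Next I would track the algorithm. In this regime only $v$ is $elected$, so no hyperedge ever has more than one $elected$ node and no threshold is exceeded; thus $v$ becomes $included$. For each of the $d_1(v)$ threshold-one hyperedges $e'\ni v$, once $v$ is $included$ the count of $included$ nodes in $e'$ reaches $t_{e'}=1$, so in Part III every other node of $e'$ is set to $excluded$. Linearity of $\mathcal H'$ then makes these newly decided nodes distinct: any two distinct threshold-one hyperedges through $v$ meet only in $v$, so the sets $e'\setminus\{v\}$ are pairwise disjoint, and choosing one excluded node from each gives $d_1(v)\ge n'/\log^{17}n'$ distinct previously undecided nodes that decide, delivering the claimed $\Theta(n'/\log^{17}n')$.

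The step demanding the most care is not the counting but the bookkeeping of the \emph{strict} subhypergraph: I must verify that a threshold-one hyperedge $e'$ of $\mathcal H'$ genuinely forces exclusions in $\mathcal H$. This holds because $e'$ inherits $t_{e'}=t_e=1$ and, by the survival rule $|e'|>t_{e'}$, still contains at least two active (undecided) nodes; had the original $e$ already contained an $included$ node its threshold would have been reached and its remaining members excluded in an earlier iteration, so $e$ currently has no $included$ node and including $v$ is precisely what pushes its count to $t_e=1$. The only other point requiring attention is the ``$n'$ sufficiently large'' threshold needed so that the $O(\log^{16}n')$ tail is dominated by $n'/\log^8 n'$, exactly as with the corresponding constant in Lemma~\ref{lemma-linear-mis-remove-nodes-case1}.
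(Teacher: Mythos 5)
Your proposal is correct and follows essentially the same route as the paper's proof: bound $\sum_{i\ge 2}u_i\hat a^i$ by $O(\log^{16}n')$ using Fact~\ref{fact-linear-hypergraph-edge-num}, conclude $u_1=\Omega((n')^2/\log^{16}n')$, take the node maximizing $d_1$ so that $d_1(v)\ge n'/\log^{17}n'$, and observe that including $v$ excludes one fresh node per threshold-one hyperedge through $v$. Your extra care about distinctness via linearity and about the strict-subhypergraph bookkeeping only makes explicit what the paper leaves implicit.
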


\begin{proof}
When $\hat{a}\leq\log^8{n'}/n'$, we have $\sum_{i= 2}^{d-1}{(u_i\cdot\hat{a}^i)}\leq\sum_{i=2}^{d-1}{u_i\cdot\hat{a}^2}\leq(\log^{16}{n'}/(n')^2)\cdot\sum_{i=2}^{d-1}{u_i}\leq(\log^{16}{n'}/(n')^2)\cdot(n')^2=\log^{16}{n'}$. Hence, $u_1=(1/\hat{a})\cdot(\sum_{i=1}^{d-1}u_i\hat{a}^i-\sum_{i=2}^{d-1}{u_i\hat{a}^i})\geq(1/\hat{a})\cdot(n'/\log^8{n'}-\log^{16}{n'})\geq(n'/\log^8{n'})\cdot(n'/\log^9{n'})=(n')^2/\log^{17}{n'}$. As a result, for the node $v$ which maximizes $d_1(v)$, we have $d_1(v)\geq u_1/n'\geq n'/\log^{17}{n'}$.

Now, notice that during part three, for the $d_1(v)$ threshold one hyperedges that contain $v$, the other nodes in each of these hyperedges will decide to not be in the GMIS (since $v$ is already in the GMIS, and each such hyperedge must have size at least two). Hence, at least $d_1(v)$ nodes will make up their minds.
\end{proof}

\begin{lemma}\label{lemma-linear-gmis-remove-nodes-case3}
Assume after part one of an iteration there are $n'$ nodes in the generated equitable strict subhypergraph. Further assume during part two $\hat{a}\geq e^{-6}$. Then, after part three of this iteration, with at least some constant probability, at least $\Theta(n')$ previously undecided nodes will decide whether to join the GMIS or not.
\end{lemma}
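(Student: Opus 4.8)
The plan is to mirror the structure of the corresponding MIS case (Lemma \ref{lemma-linear-mis-remove-nodes-case3}), adapting it to the threshold setting. Since $\hat{a}\geq e^{-6}$, we have $p_0=\min\{\hat{a},e^{-6}\}=e^{-6}$, so during part two each of the $n'$ nodes in the equitable strict subhypergraph $\mathcal{H}'$ is independently marked (i.e., becomes $elected$) with probability exactly $e^{-6}$. The first step is to observe, via a Chernoff bound, that the set $\mathcal{W}$ of marked nodes satisfies $|\mathcal{W}|=\Theta(n')$ w.h.p.\ (and hence with at least constant probability).

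Next I would bound the number of marked nodes that fail to become $included$, i.e.\ marked nodes lying in some hyperedge $e$ with $|e\cap\mathcal{W}|>t_e$ (call such $e$ \emph{over-threshold}). This count is at most $\sum_{e\in\mathcal{E}'}{|e|\cdot\mathbf{1}[|e\cap\mathcal{W}|>t_e]}$, whose expectation equals $\sum_{e\in\mathcal{E}'}{|e|\cdot\mathbb{P}(|e\cap\mathcal{W}|\geq t_e+1)}$. Using $\mathbb{P}(|e\cap\mathcal{W}|\geq t_e+1)\leq{|e|\choose t_e+1}\cdot p_0^{t_e+1}$ together with $|e|\leq d=\Theta(1)$ (so ${|e|\choose t_e+1}=\Theta(1)$), this is at most $\Theta(1)\cdot p_0\cdot\sum_{e\in\mathcal{E}'}{p_0^{t_e}}=\Theta(1)\cdot p_0\cdot\sum_{i=1}^{d-1}{u_i\cdot p_0^{i}}$, where the last step groups hyperedges by their threshold. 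Since $p_0=e^{-6}\leq\hat{a}$ gives $p_0^{i}\leq\hat{a}^{i}$ for every $i\geq 1$, the definition of $\hat{a}$ yields $\sum_{i=1}^{d-1}{u_i\cdot p_0^{i}}\leq 2n'/\log^8{n'}$, and since $p_0=\Theta(1)$ the whole expected count is $O(n'/\log^8{n'})$. By Markov's inequality this count is $O(n'/\log^8{n'})$ with at least constant probability.

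Combining the two events by a union bound over their complements, with at least constant probability we have $|\mathcal{W}|=\Theta(n')$ while only $O(n'/\log^8{n'})=o(n')$ of the marked nodes lie in an over-threshold hyperedge; hence $\Theta(n')$ marked nodes belong to no over-threshold hyperedge and therefore become $included$. It then remains to check that these nodes may legitimately join the GMIS: the set $\mathcal{I}'$ of $included$ nodes is a generalized independent set of $\mathcal{H}'$, since for each hyperedge $e$ that is not over-threshold we have $|e\cap\mathcal{I}'|\leq|e\cap\mathcal{W}|\leq t_e$, while for an over-threshold $e$ all of its marked nodes fail, so $e\cap\mathcal{I}'=\emptyset$. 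By Fact \ref{fact-strict-subhypergraph-gmis}, $\mathcal{I}'$ is also a generalized independent set of $\mathcal{H}$, so these $\Theta(n')$ previously undecided nodes indeed decide in this iteration.

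I expect the main obstacle to be the second step, the expected-count bound: it is precisely where the constant-dimension hypothesis is needed (to keep ${|e|\choose t_e+1}$ bounded by $\Theta(1)$) and where one must correctly pass from the marking probability $p_0$ to the parameter $\hat{a}$ via $p_0\leq\hat{a}$ before invoking the defining constraint on $\hat{a}$. Everything else reduces to routine Chernoff, Markov, and union-bound arguments together with the validity check supplied by Fact \ref{fact-strict-subhypergraph-gmis}.
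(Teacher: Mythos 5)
Your proposal is correct and follows essentially the same route as the paper: a Chernoff bound to get $|\mathcal{W}|=\Theta(n')$, an expected-count bound of $O(n'/\mathrm{polylog}(n'))$ on marked nodes lying in over-threshold hyperedges (grouping hyperedges by threshold and passing from $p_0$ to $\hat{a}$ exactly as the paper does via $\sum_{i=1}^{d-1}u_i\hat{a}^i\leq 2n'/\log^8 n'$), and then Markov plus a union bound. The only cosmetic difference is that you bound $\mathbb{P}(|e\cap\mathcal{W}|\geq t_e+1)$ while the paper uses the slightly looser $\binom{j}{i}p_0^i$ for the event $|e\cap\mathcal{W}|\geq t_e$, and you additionally spell out the (correct, implicit in the paper) check that the surviving marked nodes form a generalized independent set.
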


\begin{proof}
Since $\hat{a}\geq e^{-6}$, we know each node in the generated equitable hypergraph $\mathcal{H'}$ will be selected with probability $e^{-6}$. Hence, we know w.h.p.\ w.r.t.\ $n'$, $\Theta(n')$ nodes will be selected into $\mathcal{W}$. That is, with at least some constant probability, $|\mathcal{W}|=\Theta(n')$.

On the other hand, in expectation, the number of nodes in $\mathcal{W}$ that belong to some hyperedge that exceeds the threshold is upper bounded by $\sum_{i=1}^{d-1}{\sum_{j=i+1}^{d}{u_{i,j}\cdot j\cdot{j\choose i}\cdot p_0^i}}\leq\sum_{i=1}^{d-1}{\sum_{j=i+1}^{d}{u_{i,j}\cdot d\cdot{d\choose d/2}\cdot p_0^i}}\leq d\cdot e^d\cdot\sum_{i=1}^{d-1}{\sum_{j=i+1}^{d}{u_{i,j}\cdot p_0^i}}=d\cdot e^d\cdot\sum_{i=1}^{d-1}{u_i\cdot p_0^i}\leq d\cdot e^d\cdot\sum_{i=1}^{d-1}{u_i\cdot \hat{a}^i}\leq 2d\cdot e^d\cdot n'/\log^8{n'}=O(n'/\log^7{n'})$. Here, $p_0=e^{-6}$, and $u_{i,j}$ is the number of size $j$ hyperedges with threshold $i$. Hence, by a Markov's inequality, we know with at least some constant probability, after part two of the iteration, we can find a generalized independent set of size $\Theta(n')$. Moreover, the nodes in this set will decide to join the GMIS by the end of this iteration.
\end{proof}

\end{document}